\newcommand{\rem}[1]{}
\newcommand{\bom}{\mbox{\boldmath$\omega$}}
\newcommand{\bphi}{\mbox{\boldmath$\phi$}}
\newcommand{\bk}{\mbox{\boldmath$k$}}
\newcommand{\bu}{\mbox{\boldmath$u$}}
\newcommand{\bv}{\mbox{\boldmath$v$}}
\newcommand{\bw}{\mbox{\boldmath$w$}}
\newcommand{\bz}{\mbox{\boldmath$z$}}
\newcommand{\bx}{\mbox{\boldmath$x$}}
\newcommand{\bdf}{\mbox{\boldmath$f$}}
\newcommand{\Rb}{\beta}
\newcommand{\Att}{\mathcal{A}}
\newcommand{\Gamzero}{\Gamma_{0}}
\newcommand{\Gamtwo}{\Gamma_{2}}
\newcommand{\bel}{\begin{equation}\label}
\newcommand{\ee}{\end{equation}}
\newcommand{\beq}{\begin{eqnarray}\label} 
\newcommand{\eeq}{\end{eqnarray}}  
\newcommand{\bea}{\begin{align}} 
\newcommand{\eea}{\end{align}}  
\newcommand{\bc}{\begin{center}} 
\newcommand{\ec}{\end{center}} 
\newcommand{\ben}{\begin{enumerate}}
\newcommand{\een}{\end{enumerate}}
\newcommand{\bit}{\begin{itemize}}
\newcommand{\eit}{\end{itemize}}
\newcommand\shalf{\ensuremath{{\scriptstyle\frac{1}{2}}}}
\newtheorem{proposition}{Proposition}[]
\newtheorem{theorem}[proposition]{Theorem}
\newtheorem{corollary}[proposition]{Corollary}
\newtheorem{lemma}{Lemma}
\theoremstyle{definition}
\newtheorem{remark}[proposition]{Remark}
\numberwithin{equation}{section}
\numberwithin{equation}{section}
\title[\small Global attractor of the TTSH equations]{The global attractor of the Toner-Tu-Swift-Hohenberg equations of active turbulence and its properties}
\par\vspace{-2mm}
\author[D. W. Boutros]{Daniel W. Boutros}
\address[D. W. Boutros]{Department of Applied Mathematics and Theoretical Physics, University of Cambridge, Cambridge CB3 0WA UK}
\email{\tt dwb42@cam.ac.uk}
\author[K. V. Kiran]{Kolluru Venkata Kiran}
\address[K. V. Kiran]{Université Côte d’Azur, CNRS, Institut de Physique de Nice, 06200 Nice, France}
\email{{\tt kiran8147@gmail.com}}
\author[J. D. Gibbon]{John D. Gibbon}
\address[J. D. Gibbon]{Department of Mathematics, Imperial College London, London SW7 2AZ, UK}
\email{\tt j.d.gibbon@ic.ac.uk}
\author[R. Pandit]{Rahul Pandit}
\address[R. Pandit]{Centre for Condensed Matter Theory, Department of Physics, Indian Institute of Science, Bengaluru, 560012, India}
\email{\tt rahul@iisc.ac.in}
\keywords{Toner-Tu-Swift-Hohenberg equations, active turbulence, global attractor, Hausdorff dimension, Brinkman-Forchheimer equations}
\subjclass[2020]{37L30 (primary), 35B40, 35B41, 35Q92, 37L05, 37M22 (secondary)}
\date{\today}
\begin{document}
\par\vspace{-2mm}
\begin{abstract}
The Toner-Tu-Swift-Hohenberg (TTSH) equations are one of the basic equations that are used to model turbulent behaviour in active matter, specifically the swarming of bacteria in suspension. They combine features of the incompressible Navier-Stokes, the Toner-Tu and Swift-Hohenberg equations, together with the important properties that they are linearly driven, and that the Laplacian diffusion is taken to be negative in combination with hyper-dissipation. We prove that the TTSH equations possess a finite-dimensional compact global attractor on the periodic domain $\mathbb{T}^d$ ($d=2,3$) and we establish explicit estimates for its Lyapunov dimension which agree with the heuristic prediction based on the Swift-Hohenberg length scale. The predominance of this length scale (as a vortex length scale) has been observed in both numerical and experimental studies of bacterial turbulence, so our methods and results provide a rigorous theoretical foundation for this phenomenon. We also carry out pseudospectral direct numerical simulations of these PDEs in dimension $d=2$ through which we obtain Lyapunov spectra for representative parameter values. We show that our numerical results are consistent with the analytically derived rigorous bounds.
\end{abstract}
\maketitle
\par\vspace{-7mm}
{\centering \textit{\small Dedicated to Professor Peter Constantin, on the occasion of his 75th birthday.} \par}
\par\vspace{-4mm}
\tableofcontents

\section{\large Introduction}

In the subject of active turbulence \cite{Vicsek1995,TT1995,TTR2005,Rama-Rev-2010,AJC2020,ACJ2022,Rama-Rev-2019,MJR2013} the incompressible, deterministic Toner-Tu (ITT) equations are a simplified version of their compressible, stochastically forced counterparts  \cite{CLT2015,RP2020,KGPP2023}. The ITT equations are given by
\bel{ITT}
\left(\partial_{t} + \lambda\bu\cdot\nabla\right)\bu - \Gamma\Delta\bu + \nabla p = -\left(\alpha + \beta|\bu|^{2}\right)\bu\,, \quad \nabla \cdot \bu = 0\,,
\ee
where $\bu : \mathbb{T}^3 \times [0,T] \rightarrow \mathbb{R}^3$ is the velocity field, $p : \mathbb{T}^3 \times [0,T] \rightarrow \mathbb{R}$ is the pressure, $\Gamma > 0$ is the viscosity, and $\alpha \in \mathbb{R}, \beta > 0$ and $\lambda > 0$ are coefficients. In effect, the ITT equations can be viewed as the Navier-Stokes equations with a forcing term $\bdf = -\bu\left(\alpha + \beta|\bu|^{2}\right)$ on the right-hand side. A further significant and important model in active turbulence, and closely related to the ITT equations, are the Toner-Tu-Swift-Hohenberg (TTSH) equations. These were first introduced by Wensink \textit{et al}~\cite{wensink2012} in order to model bacterial suspensions, but they are also relevant to active phenomena where swarming, flocking and shoaling are dominant. Swift and Hohenberg had originally modified a nonlinear diffusion equation, designed to model certain convective processes, by replacing the standard Laplacian viscous term by both anti-diffusive and bi-Laplacian dissipative terms~\cite{SH1}. Incorporating these competing effects into the ITT equations leads to the introduction of the following TTSH equations, which are the main focus of this paper
\bel{TTSHpde}
\left(\partial_{t} + \bu\cdot\nabla\right)\bu + \Gamma_{0}\Delta\bu + \Gamma_{2}\Delta^{2}\bu + \nabla p
= - \left(\alpha + \beta|\bu|^{2}\right)\bu\,, \quad \nabla \cdot \bu = 0\,,
\ee
where $\bu : \mathbb{T}^3 \times [0,T] \rightarrow \mathbb{R}^3$ is the coarse-grained velocity field, $p : \mathbb{T}^3 \times [0,T] \rightarrow \mathbb{R}$ is the pressure, and $\Gamma_{0} > 0$, $\Gamma_{2} > 0$, and $\beta > 0$ are parameters. The parameter $\alpha$ can take either sign\footnote{In the biological literature the linear driving term is usually written as $-\alpha\bu$. Although $\alpha$ can take either sign, experimental studies tend to concentrate on negative values of $\alpha$\,: e.g., in \cite{SSR1} $\alpha$ lies in the range $-8 \leq \alpha \leq -1$. The parameter $\lambda>0$ in the material derivative accounts for the pulling/pushing effect in flocking and shoaling phenomena. For simplicity we take $\lambda=1$, but the results can also be derived for different values of $\lambda$.} and the parameter $\lambda$ in (\ref{ITT}) is taken to be unity. 
\par\smallskip
The TTSH equations~\eqref{TTSHpde} possess certain significant features, such as the existence of steady state solutions $|\overline{\bu}|^{2} = -\alpha/\beta$ (also known as the ordered polar states \cite{wensink2012}), under the assumption that $\alpha < 0$. Moreover, the equations have the property that long-wavelength perturbations around both the zero state and the ordered polar states $\overline{\bu}$ are linearly unstable. In the former case, small perturbations of the type $\varepsilon\exp(i\bk\cdot\bx + \Sigma t)$ solve the linearised TTSH equations around $\bu=0$ if $\Sigma$ satisfies (cf. equation (30) in the appendix of \cite{wensink2012})
\bel{gr1}
\Sigma = -\alpha + \Gamma_{0}\lvert \bk \rvert^{2} - \Gamma_{2} \lvert \bk \rvert^{4}\,.
\ee
Thus, $\Sigma > 0$ for a finite bandwidth of low wavenumbers $\bk$ such that $\lvert \bk \rvert^{2} < \lvert \bk_{c} \rvert^{2}$, whereas $\Sigma < 0$ for wavenumbers $\bk$ such that $\lvert \bk \rvert^{2} > \lvert \bk_{c} \rvert^{2}$. The critical value $\lvert \bk_{c} \rvert^{2}$ is given by (as derived in equation (31) in the Appendix of \cite{wensink2012})
\bel{SHscale}
\lvert \bk_{c} \rvert^{2} = \frac{\Gamma_{0}}{2\Gamma_{2}}\left\{1 + \left(1 - 4\alpha\frac{\Gamma_{2}}{\Gamma_{0}^2}\right)^{1/2}\right\}\,.
\ee
Usually $\Gamma_{0} >> \Gamma_{2}$ and $0 < \Gamma_{2}<<1$, so the term in the square root is close to unity (assuming that $\alpha$ is of order unity), leaving us with $\lvert \bk_{c} \rvert^{2} \approx \Gamma_{0}/\Gamma_{2}$. We observe that $\lvert \bk_{c} \rvert$ then approximates the inverse of what is called the Swift-Hohenberg scale $\eta_{\,\textsc{sh}} \coloneqq \Gamma_{2}/\Gamma_{0}$, where the linear anti-diffusion and bi-Laplacian linear terms dominate the dynamics and the short wavelength scales $\eta < \eta_{\,\textsc{sh}}$ are stable, whereas long wavelength scales $\eta > \eta_{\,\textsc{sh}}$ are unstable. Additionally, a linear stability analysis around the ordered polar state $|\overline{\bu}|^{2} = -\alpha/\beta$ can easily be performed by writing 
\bel{steady}
\bu = \overline{\bu} + \varepsilon \exp(i\bk\cdot\bx + \Sigma_{0} t)\,.
\ee
The details can be found in the Appendix in \cite{wensink2012} and also in \cite{bui2019,bui2023}. It is shown in \cite{wensink2012} that $\Sigma_{0} > 0$ for a range of wavenumbers $\bk$ such that $\lvert \bk \rvert^{2} < \lvert \bk_{c} \rvert^{2}$ where $\lvert \bk_{c} \rvert^{2} = \Gamma_{0}/\Gamma_{2}$. Note that here we have omitted the additional restrictions on $\bk_x$\,: cf. equation (42) in the Appendix in \cite{wensink2012}.
\par\smallskip
Furthermore, in the spatio-temporally chaotic (turbulent) regime, linear instabilities give rise to vortical structures with a characteristic size $\sim\mathcal{O}(\eta_{\,\textsc{sh}})$. As a result, the energy spectrum peaks at $\lvert \bk_c \rvert = \eta_{\,\textsc{sh}}^{-1}$, with a finite inverse energy flux toward modes $\bk$ such that $\lvert \bk \rvert < \lvert \bk_c \rvert$, resembling the classical inverse cascade in $2d$ fluid turbulence \cite{bratanov2015new,cp2020friction,Boffetta2012two,pandit2017overview}. In this sense, $\eta_{\,\textsc{sh}}$ acts effectively as a forcing scale. However, this interpretation holds only for moderate values of $\alpha \sim \mathcal{O}(1)$. When $\alpha<0$, the $\alpha\bu$ term also injects energy, and is dominant at long wavelengths. For sufficiently large $\alpha$, vortical structures of size $\mathcal{O}(10\eta_{\,\textsc{sh}})$ appear, shifting the spectral peak to the range of modes $\bk$ with $\lvert \bk \rvert < \lvert \bk_c \rvert$~\cite{mukherjee2021anomalous,SSR1,Pandit_2025,kashyap2025emergence}. Although the dependence of $\bk_c$ on $\alpha$ is already apparent from Eq.~\eqref{SHscale}, nonlinear effects amplify this behaviour significantly. In particular, when $-\alpha\sim \Gamma_0/\Gamma_2$, the critical value $\bk_c$ depends on $\alpha$ at leading order. Numerical simulations have revealed that even for $\alpha\leq-5$, large-scale velocity correlations emerge with characteristic size $\sim \mathcal{O}(10\eta_{\,\textsc{sh}})$~\cite{mukherjee2021anomalous,SSR1,Boff2023,kiran2025onset,kashyap2025emergence}.
\par\smallskip
Given the instability of long wavelength perturbations (i.e. $\lvert \bk \rvert^{2} < \lvert \bk_{c} \rvert^{2}$), the outstanding question is the long-time behaviour of solutions of the TTSH equations.  In the mathematical literature, the global well-posedness of the 3D TTSH equations (for $H^2$-data) has been established on the whole space in \cite{Zanger2016}. In this paper we show that the TTSH equations on the unit periodic domain $\mathbb{T}^d$ ($d=2,3$) possess a finite-dimensional global attractor $\Att$. The unit domain has been chosen for simplicity. In descriptive terms, $\Att$ is a finite-dimensional compact set into which solutions from all initial conditions are attracted. Its properties are therefore key to our understanding of the (asymptotic) dynamics of solutions. Moreover, the existence of $\Att$ leads to the fundamental idea of the system having a finite number of degrees of freedom, as $\Att$ is finite-dimensional.  Following Landau and Lifshitz \cite{LandauLifshitz6}, if $\eta_{res}$ is the smallest resolution scale of the dynamics in a $d$-dimensional domain of volume $L^{d}$, then the number of degrees of freedom $\mathcal{N}$ is defined as
\bel{Ndef}
\mathcal{N} = \left(\frac{L}{\eta_{res}}\right)^{d}\,.
\ee
For a PDE, the Lyapunov dimension $d_{L}(\mathcal{A})$ of the global attractor $\mathcal{A}$ is usually interpreted as $\mathcal{N}$ \cite{CFMT}, an idea which is explained in \S\ref{sect:3}. In this paper, we will show that in the case $\Gamtwo^{-1},\Gamzero \gg \alpha, \beta$, the dimension estimate of the global attractor leads to the following length scale
\begin{equation}
\eta_{res} \sim \eta_{\,\textsc{sh}}\,.
\end{equation}
To be more precise, in the two-dimensional case we will show the following dimension estimate (for some constant $c > 0$)
\begin{equation*}
d_L (\Att) \leq c \max \left\{\left(\frac{\Gamzero^2}{\Gamtwo^2} - \alpha \Gamtwo^{-1}\right)^{1/2}\,;
~~\left\lvert \alpha - \frac{1}{2} \frac{\Gamzero^2}{\Gamtwo} \right\rvert^{3/8}\Gamtwo^{-5/8}\Rb^{-1/4}
\right\}\,,
\end{equation*}
while in the three-dimensional case we will prove the following bound
\begin{equation*} 
d_L (\Att) \leq c \max \left\{\left(\frac{\Gamzero^2}{\Gamtwo^{2}} - \alpha \Gamtwo^{-1} \right)^{3/4}\,;~~
2^{9/4}\Gamtwo^{-1} \Rb^{-1/2} \left\lvert \alpha - \frac{1}{2} \frac{\Gamzero^2}{\Gamtwo} \right\rvert^{1/2}\right\}\,.
\end{equation*}
Therefore, in the relevant parameter ranges (i.e. $\Gamtwo^{-1},\Gamzero \gg \alpha, \beta$) we confirm the predominance of the Swift-Hohenberg scale through the $2d$ and $3d$ attractor dimension estimates that are established in \S\ref{2datt} and \S\ref{3datt} (cf. Remarks \ref{2Dremark} and \ref{3Dremark}). This occurs because the contribution from the nonlinear terms to the asymptotic dynamics is negligible compared to the linear terms in this setting. 
\par\smallskip
In a survey of the literature concerning the TTSH equations we note that they share similar features with the Brinkman-Forchheimer equations, which can be coupled to a temperature or a magnetic field\,: see \cite{markowich,titi2019,titi2022} and references therein. Additionally, there are structural similarities with the Ginzburg-Landau equation \cite{doering1988,ghidaglia1987,BCDGG1989,duan1993,LO1997}. The anti-diffusion and bi-Laplacian terms are features that also appear in the Kuramoto-Sivashinsky (KS) equation which models the diffusive thermal instabilities in a laminar flame front. An extensive list of references can be found in the paper by Kalogirou, Keaveny and Papageorgiou \cite{Papa2015} on the wide range of studies conducted on the KS equations in the last 40 years\,: see also \cite{FNST1988,KNS1990,jolly1990}. The related ITT equations (\ref{ITT}) have also been studied recently in \cite{KGPP2023,bae,choi2024}. In \cite{KGPP2023} the global well-posedness of the $2d$ ITT equations was established, as well as the global existence of weak solutions in the $3d$ case. In \cite{bae} the global existence of strong solutions for the $3d$ ITT equations was established, under the assumption of small initial data. Asymptotic stability and decay rates for several steady states for both the $2d$ and $3d$ cases are also obtained in \cite{bae}. The compressible Toner-Tu equations were studied in \cite{choi2024}, in which the global existence of strong solutions for small initial data and the asymptotic stability of a class of steady states was proved.
\par\smallskip
In a volume celebrating Peter Constantin's birthday, it should be noted that many of his pioneering techniques and ideas will be used in later sections, such as\,: i) how to estimate the Lyapunov dimension of a global attractor for a dissipative PDE using the idea of global Lyapunov exponents \cite{ConstantinFoiasattractors,CFT2,ConstantinFoias,CFMT,Constantin1987}\,; and ii) the introduction and use in this setting of the Lieb-Thirring inequalities \cite{Lieb1,Ruelletrace,CFT2}. In \S\ref{sect:2} we will establish the global existence and uniqueness of weak solutions for the TTSH equations and the existence of a global attractor $\mathcal{A}$, while in \S\ref{sect:3} we will obtain analytical estimates for $d_{L}(\mathcal{A})$ in both two and three dimensions. 
\par\smallskip
In addition to these results, in \S\ref{Kiran} we also perform pseudospectal numerical calculations of Lyapunov spectra and, by extension, compute the attractor dimension using the vorticity formulation as opposed to the velocity formulation, a choice which reduces the computational cost. While our analytical estimates have been performed by using the velocity formulation of the TTSH equations, in \S\ref{sect:2} we demonstrate that $\Att_{L^2} = \Att_{H^1}$ (using standard techniques), thereby ensuring that the attractor is identical in the two formulations of the equations. In \S\ref{conclusions} we will make some concluding remarks and discuss the significance of our results.

\section{\large Proof of global well-posedness and the existence of a global attractor}\label{sect:2}

First we will demonstrate that the TTSH equations (\ref{TTSHpde}) are globally well-posed, both in dimensions two and three. The global well-posedness of these equations (on the whole space) has already been established in \cite{Zanger2016}. However, to deduce the existence of a global attractor in the case of a periodic domain we will need different a priori estimates, in particular, to establish the existence of an absorbing ball. Thus, we give a new proof of the global well-posedness of the TTSH equations below. Throughout \S \ref{sect:2} and \S \ref{sect:3} we will restrict to the case of the domain having unit length, we will restore the units of length later in \S \ref{Kiran}.
\begin{theorem} \label{weaksolutionsthm}
Let $\bu_0 \in L^2 (\mathbb{T}^d)$, $T > 0$ and $d=2,3$. Then there exists a unique global weak solution $\bu \in C([0,T];L^2(\mathbb{T}^d)) \cap L^2 ((0,T); H^2 (\mathbb{T}^d)) \cap L^4 ((0,T); L^4 (\mathbb{T}^d))$ of the TTSH equations.
\end{theorem}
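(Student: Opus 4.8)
The plan is to construct the solution by a Galerkin approximation, derive uniform energy estimates that give compactness, pass to the limit, and prove uniqueness separately. Project the TTSH equations onto the span $H_m$ of the first $m$ divergence-free Fourier modes using the Leray projector $\mathbb{P}$ (which annihilates $\nabla p$), obtaining an ODE system for $\bu_m = P_m \bu$. Local existence is immediate since the nonlinearity is polynomial; global existence on $[0,T]$ will follow from the a priori bounds below. The key structural point is that, although $\Gamma_0 \Delta \bu$ is anti-dissipative, it is controlled by interpolation between the coercive $\Gamma_2 \Delta^2 \bu$ term and the $L^2$ norm: by Young's inequality $\Gamma_0 \|\nabla \bu_m\|_{L^2}^2 \leq \tfrac{\Gamma_2}{2}\|\Delta \bu_m\|_{L^2}^2 + \tfrac{\Gamma_0^2}{2\Gamma_2}\|\bu_m\|_{L^2}^2$. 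Likewise the cubic term is a friend: testing with $\bu_m$ produces $+\beta\|\bu_m\|_{L^4}^4 \geq 0$ on the left, while the linear driving contributes $\alpha\|\bu_m\|_{L^2}^2$, which is absorbed regardless of the sign of $\alpha$.

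First I would take the $L^2$ inner product of the Galerkin system with $\bu_m$. The transport term vanishes by incompressibility, $\Gamma_2\Delta^2$ contributes $\Gamma_2\|\Delta \bu_m\|_{L^2}^2$, and after the interpolation above one arrives at
\begin{equation*}
\tfrac{1}{2}\tfrac{d}{dt}\|\bu_m\|_{L^2}^2 + \tfrac{\Gamma_2}{2}\|\Delta \bu_m\|_{L^2}^2 + \beta\|\bu_m\|_{L^4}^4 \leq \left(\tfrac{\Gamma_0^2}{2\Gamma_2} - \alpha\right)\|\bu_m\|_{L^2}^2\,.
\end{equation*}
Gr\"onwall in time then gives a bound on $\|\bu_m\|_{L^2}$ uniform in $m$ on $[0,T]$; integrating in time yields uniform bounds in $L^2((0,T);H^2)$ and $L^4((0,T);L^4)$. (Here I am using that on $\mathbb{T}^d$ the homogeneous $\|\Delta\cdot\|_{L^2}$, together with $\|\cdot\|_{L^2}$, controls the full $H^2$ norm.) Next I would bound $\partial_t \bu_m$ in $L^{4/3}((0,T);H^{-2})$: the term $\Gamma_0\Delta\bu_m + \Gamma_2\Delta^2\bu_m$ is bounded in $L^2((0,T);H^{-2})$, the cubic term $|\bu_m|^2\bu_m$ is bounded in $L^{4/3}((0,T);L^{4/3}) \hookrightarrow L^{4/3}((0,T);H^{-2})$ (using $d\leq 3$ so that $H^2 \hookrightarrow L^\infty$), and the transport term $\mathbb{P}(\bu_m\cdot\nabla\bu_m)$ is handled similarly by duality against $H^2$. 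The Aubin--Lions lemma then provides a subsequence converging strongly in $L^2((0,T);H^1)$ and in $C([0,T];H^{-1})$, which suffices to pass to the limit in all terms, including the quadratic and cubic nonlinearities; the continuity $\bu\in C([0,T];L^2)$ follows from a standard Lions--Magenes argument since $\bu\in L^2(H^2)$ and $\partial_t\bu\in L^{4/3}(H^{-2})$.

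For uniqueness, let $\bu,\bv$ be two weak solutions with the same datum and set $\bw = \bu - \bv$. Subtracting the equations and testing with $\bw$, the linear terms are handled exactly as above, the difference of the cubic terms $(\alpha + \beta|\bu|^2)\bu - (\alpha+\beta|\bv|^2)\bv$ has a monotone principal part so contributes with a favourable sign up to a term estimated by $\|\bw\|_{L^2}^2$ times powers of $\|\bu\|_{L^4},\|\bv\|_{L^4}$, and the transport difference $\bw\cdot\nabla\bu + \bv\cdot\nabla\bw$ yields, after the incompressible cancellation of the second piece, a term bounded via $\|\bw\cdot\nabla\bu\|\cdot\|\bw\| \leq \tfrac{\Gamma_2}{4}\|\Delta\bw\|_{L^2}^2 + C\|\nabla\bu\|_{L^2}^{?}\|\bw\|_{L^2}^2$ using Ladyzhenskaya/Agmon interpolation (the case $d=3$ being the tighter one, where one uses $\|\bw\|_{L^6}\lesssim\|\Delta\bw\|_{L^2}^{1/2}\|\bw\|_{L^2}^{1/2}$-type estimates to absorb a fractional power of $\|\Delta\bw\|$). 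This produces a differential inequality $\tfrac{d}{dt}\|\bw\|_{L^2}^2 \leq \phi(t)\|\bw\|_{L^2}^2$ with $\phi\in L^1(0,T)$ (since $\bu\in L^2(H^2)\subset L^2(H^1)$ and, in $d=3$, $\bu\in L^4(L^4)$ supplies the needed integrability), and Gr\"onwall gives $\bw\equiv 0$. The main obstacle is the uniqueness estimate in $d=3$: the anti-diffusive $\Gamma_0\Delta$ term means one cannot afford to be wasteful when absorbing the transport nonlinearity into the bi-Laplacian dissipation, so the interpolation exponents must be tracked carefully to ensure the multiplier $\phi$ is genuinely integrable and not merely finite a.e.
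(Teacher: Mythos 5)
Your proposal is correct and follows essentially the same route as the paper: Galerkin approximation, absorption of the anti-diffusive $\Gamma_0\Delta$ term into the $\Gamma_2\Delta^2$ dissipation via interpolation and Young's inequality, coercivity from the cubic term, Aubin--Lions compactness plus Lions--Magenes for continuity in time, and uniqueness via a Gr\"onwall argument in which the transport difference is interpolated and absorbed into the bi-Laplacian. The one point you leave schematic --- the exponents in the $d=3$ uniqueness estimate --- closes exactly as you anticipate: the paper uses $\lVert v\rVert_{L^3}\lVert v\rVert_{L^6}\lVert\nabla\bu_2\rVert_{L^2}\leq\lVert v\rVert_{L^2}^{5/4}\lVert\Delta v\rVert_{L^2}^{3/4}\lVert\nabla\bu_2\rVert_{L^2}$ and Young's inequality, producing the multiplier $\Gamma_2^{-3/5}\lVert\nabla\bu_2\rVert_{L^2}^{8/5}$, which is integrable on $(0,T)$ because $\bu_2\in L^2((0,T);H^2)$.
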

\begin{proof}
We will prove the existence of solutions by means of the Galerkin method. Let $P_N$ be the projection onto the first $N$ eigenfunctions of the Stokes operator, where the zeroth mode is also included.  We will consider the (truncated) Galerkin system with the divergence-free constraint $\mbox{div}\,\bu_N = 0$
\begin{align}
&\partial_{t} \bu_N + P_N [\bu_N \cdot\nabla \bu_N] + \nabla p_N \nonumber \\
&= -\alpha\bu_N - \Rb\,P_N [\bu_N|\bu_N|^{2}] - \Gamzero \Delta\bu_N - \Gamtwo \Delta^{2}\bu_N\,. \label{galerkinsystem}
\end{align}
The Galerkin system \eqref{galerkinsystem} is a coupled system of ODEs with quadratic nonlinearities, so we deduce that there exists a unique local-in-time solution to equation \eqref{galerkinsystem}. We take the $L^2 (\mathbb{T}^d)$ inner product of equation \eqref{galerkinsystem} with $\bu_N$, which gives
\begin{align*}
&\frac{1}{2} \frac{d}{d t} \lVert \bu_N \rVert_{L^2}^2 + \alpha \lVert \bu_N \rVert_{L^2}^2 + \Rb \lVert \bu_N \rVert_{L^4}^4 - \Gamzero \lVert \nabla \bu_N \rVert_{L^2}^2 + \Gamtwo \lVert \Delta \bu_N \rVert_{L^2}^2 = 0\,.
\end{align*}
By using an interpolation estimate and Young's inequality, one observes 
\begin{align}
\Gamzero \lVert \nabla \bu_N \rVert_{L^2}^2 &\leq \Gamzero \lVert \bu_N \rVert_{L^2} \lVert \Delta \bu_N \rVert_{L^2} \nonumber \\
&\leq \frac{1}{2} \Gamzero^{2}\,\Gamtwo^{-1} \lVert \bu_N \rVert_{L^2}^2 + \frac{1}{2} \Gamtwo \lVert \Delta \bu_N \rVert_{L^2}^2\,. \label{antidiffusionyoungestimate}
\end{align}
By using H\"{o}lder's inequality we find
\begin{equation*}
\lVert \bu_N \rVert_{L^2} \leq \lVert \bu_N \rVert_{L^4}\,,
\end{equation*}
on a unit periodic domain $[0,\,1]^d$. For the sake of simplicity, we will now introduce the parameter
\begin{equation}
\alpha_{R} \coloneqq \alpha - \frac{1}{2} \frac{\Gamzero^2}{\Gamtwo}\,.
\end{equation}
Therefore, we can rewrite the $L^2$-estimate as follows 
\begin{align*}
\frac{1}{2} \frac{d}{d t} \lVert \bu_N \rVert_{L^2}^2 + \alpha_{R}\lVert \bu_N \rVert_{L^2}^2 + 
\frac{\Rb}{2} \lVert \bu_N \rVert_{L^2}^{4} + \frac{\Rb}{2} \lVert \bu_N \rVert_{L^4}^{4} + \frac{1}{2} \Gamtwo \lVert \Delta \bu_N \rVert_{L^2}^2 \leq 0\,.
\end{align*}
We will restrict ourselves to the case $\alpha_{R} < 0$; otherwise all solutions will converge to zero.\\
To estimate the contribution of the $\alpha_{R}\lVert \bu_N \rVert_{L^2}^2$-term we again use Young's inequality to find
\begin{equation}
\alpha_{R}\lVert \bu_N \rVert_{L^2}^2 \geq - \frac{\Rb}{2}\lVert \bu_N \rVert_{L^2}^{4} - 2 \frac{\alpha_R^2}{\Rb} + \lvert \alpha_{R}\rvert \lVert \bu_N \rVert_{L^2}^{2}\,.
\end{equation}
Inserting this relation into the previous estimate then gives
\begin{align}\label{L2estimate}
&\frac{1}{2} \frac{d}{d t} \lVert \bu_N \rVert_{L^2}^2 + \lvert \alpha_{R}\rvert \lVert \bu_N \rVert_{L^2}^{2} + \frac{\Rb}{2} \lVert \bu_N \rVert_{L^4}^4 + \frac{1}{2} \Gamtwo \lVert \Delta \bu_N \rVert_{L^2}^2 \leq 2 \frac{\alpha_{R}^2}{\Rb}\,.
\end{align}
From this estimate we can deduce that the Galerkin approximations are global-in-time. After dropping the hyper-diffusion term from \eqref{L2estimate} as well as the contribution from the cubic term, we find by applying Gronwall's inequality
\begin{equation}
\lVert \bu_N (\cdot, t) \rVert_{L^2}^2 \leq \lVert \bu_0 \rVert_{L^2}^2 e^{-2|\alpha_{R}|t} +
2 \frac{|\alpha_{R}|}{\Rb}(1 - e^{-2|\alpha_{R}|t})\,.
\end{equation}
Hence, for any $\bu_0 \in L^2 (\mathbb{T}^d)$, there exists a time $T^*$ (depending on $\lVert \bu_0 \rVert_{L^2}$) such that
\begin{equation} \label{absorbingballL2}
\lVert \bu_N (\cdot, t) \rVert_{L^2}^2 \leq 4 \frac{|\alpha_{R}|}{\Rb} =: k_{1}\,,
\end{equation}
where the bound is independent of $N$. Now we integrate equation \eqref{L2estimate} in time, which leads to (again for $t \geq T^*$)
\begin{align} 
\frac{\Rb}{2} \int_t^{t+1} \lVert \bu_N \rVert_{L^4}^4 dt' &+ \frac{1}{2} \Gamtwo \int_t^{t+1} \lVert \Delta \bu_N \rVert_{L^2}^2 dt' \nonumber \\
&\leq \shalf \lVert \bu_N (\cdot, t) \rVert_{L^2}^2 + 2 \frac{\lvert \alpha_{R} \rvert^2}{\Rb} \leq \shalf k_1 + 2
\frac{\lvert \alpha_{R} \rvert^2}{\Rb} \leq 4
\frac{\lvert \alpha_{R} \rvert^2}{\Rb} \eqqcolon k_{2}\,, \label{L2absorbingball1}
\end{align}
where we have made the assumption (for the sake of simplicity) that $\lvert \alpha_R \rvert \geq 1$. Moreover, in the above computation it is understood that the factor of unity in the integral limit $t+1$ and on the right hand side of (\ref{L2absorbingball1}) is a unit of time. In the case $\lvert \alpha_R \rvert < 1$, we obtain instead
\begin{align} 
\frac{\Rb}{2} \int_t^{t+1} \lVert \bu_N \rVert_{L^4}^4 dt' &+ \frac{1}{2} \Gamtwo \int_t^{t+1} \lVert \Delta \bu_N \rVert_{L^2}^2 dt' \nonumber \\
&\leq \shalf \lVert \bu_N (\cdot, t) \rVert_{L^2}^2 + 2 \frac{\lvert \alpha_{R} \rvert^2}{\Rb} \leq 2
\frac{\lvert \alpha_{R} \rvert}{\Rb} + 2
\frac{\lvert \alpha_{R} \rvert^2}{\Rb} \leq 3
\frac{\lvert \alpha_{R} \rvert^2}{\Rb} + \frac{1}{\Rb}\,, 
\end{align}
where the additional term $\frac{1}{\Rb}$ can then be included in the attractor dimension estimate, leading to a lower order term (with respect to $\Gamzero$ and $\Gamtwo$). Subsequently, by integrating equation \eqref{L2estimate} in time, we find
\begin{equation} \label{L2absorbingball2}
\lvert \alpha_{R} \rvert \int_t^{t+1} \lVert \bu_N \rVert_{L^2}^2 dt' \leq k_2\,.
\end{equation}
Now we turn to the gradient estimates. Taking the $L^2 (\mathbb{T}^d)$ inner product of equation \eqref{galerkinsystem} with $-\Delta \bu_N$ leads to
\begin{align*}
&\frac{1}{2} \frac{d}{d t} \lVert \nabla \bu_N \rVert_{L^2}^2 + \alpha \lVert \nabla \bu_N \rVert_{L^2}^2 - \Gamzero \lVert \Delta \bu_N \rVert_{L^2}^2 + \Gamtwo \lVert \Delta \nabla \bu_N \rVert_{L^2}^2 \\
&= \Rb \int_{\mathbb{T}^d} \lvert \bu_N \rvert^2 \bu_N \cdot \Delta \bu_N dx + \int_{\mathbb{T}^d} (\bu_N \cdot \nabla \bu_N) \cdot \Delta \bu_N dx\,.
\end{align*}
We recall the following identity (see, e.g., Lemma 1 in~\cite{robinson2014})\,:
\begin{equation*}
\int_{\mathbb{T}^d} \lvert \bu_N \rvert^2 \lvert \nabla \bu_N \rvert^2 dx \leq \int_{\mathbb{T}^d} \lvert \bu_N \rvert^2 \bu_N 
\cdot\left(-\Delta \bu_N\right)dx\,.
\end{equation*}
In addition, in the two-dimensional case we have the standard cancellation 
\begin{equation} \label{2dcancellation}
\int_{\mathbb{T}^d} (\bu_N \cdot \nabla \bu_N) \cdot \Delta \bu_N dx = 0\,,
\end{equation}
because $\bu_N$ is divergence-free (cf. Chapter II, Eq. (A.62) in~\cite{foias2001}). In the three-dimensional case we have
\begin{align*}
\int_{\mathbb{T}^d} (\bu_N \cdot \nabla \bu_N) \cdot \Delta \bu_N dx &\leq \lVert \nabla \bu_N \rVert_{L^3}^3 \lesssim \lVert \nabla \bu_N \rVert_{\dot{H}^{1/2}}^3\\ &\lesssim \lVert \bu_N \rVert_{L^2}^{3/2} \lVert \bu_N \rVert_{\dot{H}^3}^{3/2} \\
&\lesssim \frac{\Gamtwo}{6} \lVert \nabla \Delta \bu_N \rVert_{L^2}^2 + \frac{60}{\Gamtwo^3} \lVert \bu_N \rVert_{L^2}^{6}\,.
\end{align*}
From now on, we will restrict our discussion to the three-dimensional case; the two-dimensional case can be treated in a similar fashion (by using the aforementioned cancellation identity \eqref{2dcancellation}). Using an interpolation estimate we have
\begin{align*}
\Gamzero \lVert \Delta \bu_N \rVert_{L^2}^2 &\leq \Gamzero \lVert  \bu_N \rVert_{L^2}^{2/3} \lVert \nabla \Delta \bu_N \rVert_{L^2}^{4/3} \\
&\leq \frac{1}{3} \frac{\Gamzero^3}{\Gamtwo^2} \lVert \bu_N \rVert_{L^2}^2 + \frac{2}{3} \Gamtwo \lVert \nabla \Delta \bu_N \rVert_{L^2}^{2}\,.
\end{align*}
This then leads to the estimate
\begin{align}
\frac{1}{2} \frac{d}{d t} \lVert \nabla \bu_N \rVert_{L^2}^2 &+ \alpha \lVert \nabla \bu_N \rVert_{L^2}^2 + \frac{1}{6} \Gamtwo \lVert \Delta \nabla \bu_N \rVert_{L^2}^2 + \Rb \int_{\mathbb{T}^d} \lvert \bu_N \rvert^2 \lvert \nabla \bu_N \rvert^2 dx \nonumber \\
&\leq \frac{1}{3} \frac{\Gamzero^3}{\Gamtwo^2} \lVert \bu_N \rVert_{L^2}^2 + \frac{60}{\Gamtwo^{3} \lVert \bu_N \rVert_{L^2}^6}\,.\label{H1estimate}
\end{align}
If we drop the coercive terms, and integrate in time\footnote{It is to be understood that the +1 in the interval $[t,\,t+1]$ is a unit of time and likewise in all the intervals $[t+m,\,t+m+1]$.} from $s$ to $t+1$ [where $s \in [t, t+1)$], we find
\begin{align*}
\lVert \nabla \bu_N (\cdot, t + 1) \rVert_{L^2}^2 &\leq \lVert \nabla \bu_N (\cdot, s) \rVert_{L^2}^2 + \frac{1}{3} \frac{\Gamzero^3}{\Gamtwo^2} \int_s^{t+1} \lVert \bu_N \rVert_{L^2}^2 dt' \\
&+ \frac{60}{\Gamtwo^3} \int_s^{t+1} \lVert \bu_N \rVert_{L^2}^6 dt' + \lvert \alpha \rvert \int_s^{t+1} \lVert \nabla \bu_N (\cdot, s) \rVert_{L^2}^2 dt'\,.
\end{align*} 
Because of estimates \eqref{absorbingballL2} and \eqref{L2absorbingball1} we can integrate with respect to $s$ from $t$ to $t+1$ to obtain (for $t \geq T^* + 1$)
\begin{align} 
\lVert \nabla \bu_N (\cdot, t + 1) \rVert_{L^2}^2 &\leq (\lvert \alpha \rvert + 1) \int_t^{t+1} \lVert \nabla \bu_N (\cdot, s) \rVert_{L^2}^2 dt' + \frac{1}{3} \frac{\Gamzero^3}{\Gamtwo^2} \int_t^{t+1} \lVert \bu_N \rVert_{L^2}^2 dt' \nonumber \\
&+ \frac{60}{\Gamtwo^3} \int_t^{t+1} \lVert \bu_N \rVert_{L^2}^6 dt' \leq k_{3}\,. \label{H1absorbingball}
\end{align}
Taking the $L^2 (\mathbb{T}^d)$ inner product of equation \eqref{galerkinsystem} with $\Delta^2 \bu_N$ and integrating by parts twice, leads to
\begin{align*}
&\frac{1}{2} \frac{d}{d t} \lVert \Delta \bu_N \rVert_{L^2}^2 + \alpha \lVert \Delta \bu_N \rVert_{L^2}^2 - \Gamzero \lVert \nabla \Delta \bu_N \rVert_{L^2}^2 + \Gamtwo \lVert \Delta^2 \bu_N \rVert_{L^2}^2 \\
&= - \Rb \int_{\mathbb{T}^d} \lvert \bu_N \rvert^2 \bu_N \cdot \Delta^2 \bu_N dx - \int_{\mathbb{T}^d} (\bu_N \cdot \nabla \bu_N) \cdot \Delta^2 \bu_N dx\,.
\end{align*}
By applying Young's inequality we find that
\begin{align*}
&\frac{1}{2} \frac{d}{d t} \lVert \Delta \bu_N \rVert_{L^2}^2 + \alpha_{R} \lVert \Delta \bu_N \rVert_{L^2}^2 + \frac{\Gamtwo}{2} \lVert \Delta^2 \bu_N \rVert_{L^2}^2 \\
&= - \Rb \int_{\mathbb{T}^d} \lvert \bu_N \rvert^2 \bu_N \cdot \Delta^2 \bu_N dx - \int_{\mathbb{T}^d} (\bu_N \cdot \nabla \bu_N) \cdot \Delta^2 \bu_N dx \\
&\leq \frac{\Gamtwo}{4} \lVert \Delta^2 \bu_N \rVert_{L^2}^2 + \frac{2 \Rb^2}{\Gamtwo} \lVert \bu_N \rVert_{L^6}^6 + \frac{2}{\Gamtwo} \lVert \bu_N \rVert_{L^6}^2 \lVert \nabla \bu_N \rVert_{L^3}^{2}\,,
\end{align*}
for the three-dimensional case, whence we deduce that
\begin{align*}
&\frac{1}{2} \frac{d}{d t} \lVert \Delta \bu_N \rVert_{L^2}^2 + \alpha_{R} \lVert \Delta \bu_N \rVert_{L^2}^2 + \frac{\Gamtwo}{4} \lVert \Delta^2 \bu_N \rVert_{L^2}^2 \\
&\leq \frac{2 \Rb^2}{\Gamtwo} \lVert \nabla \bu_N \rVert_{L^2}^6 + \frac{2}{\Gamtwo} \lVert \bu_N \rVert_{H^1}^{3} \lVert \Delta \bu_N \rVert_{L^2}\,.
\end{align*}
Then, by integrating in time from $s$ to $t+2$, where $s \in [t+1, t+2)$, we find
\begin{align*}
\lVert \Delta \bu_N (\cdot, t+2) \rVert_{L^2}^2 &\leq \lVert \Delta \bu_N (\cdot, s) \rVert_{L^2}^2 + \int_s^{t+2} \bigg[ \frac{2 (\Rb^2 + 1)}{\Gamtwo} \lVert \bu_N \rVert_{H^1}^{6} \\
&+ \left( \frac{1}{4 \Gamtwo} + \lvert \alpha_{R} \rvert \right) \lVert \Delta \bu_N \rVert_{L^2}^2 \bigg] dt'.
\end{align*}
Next, by integrating the variable $s$ from $t+1$ to $t+2$, we find that
\begin{align*}
\lVert \Delta \bu_N (\cdot, t+2) \rVert_{L^2}^2 &\leq \int_{t+1}^{t+2} \lVert \Delta \bu_N (\cdot, s) \rVert_{L^2}^2 ds + \int_{t+1}^{t+2} \bigg[ \frac{2 (\Rb^2 + 1)}{\Gamtwo}\lVert \bu_N \rVert_{H^1}^{6} \\
&+ \left( \frac{1}{4 \Gamtwo} + \lvert \alpha_{R} \rvert \right) \lVert \Delta \bu_N \rVert_{L^2}^2 \bigg] dt'.
\end{align*}
Finally, using estimates \eqref{absorbingballL2}, \eqref{L2absorbingball1} and \eqref{H1absorbingball}, we deduce that there exists a constant $k_4$ such that, for $t \geq T^* + 2$, we have
\begin{equation} \label{H2absorbingball}
\lVert \Delta \bu_N (\cdot, t+2) \rVert_{L^2}^2 \leq k_4.
\end{equation}
From estimates \eqref{L2estimate} and \eqref{H1estimate} we find that, for any $T>0$,
\begin{equation} \label{aprioribounds}
\lVert \bu_N \rVert_{L^\infty ((0,T); L^2 (\mathbb{T}^d)} + \lVert \bu_N \rVert_{L^4 ((0,T); L^4 (\mathbb{T}^d))} + \lVert \bu_N \rVert_{L^2 ((0,T); H^2 (\mathbb{T}^d))} \leq K,
\end{equation}
for some constant $K > 0$ which is independent of $N$, but its value may change from line to line. Using estimate \eqref{aprioribounds} and equation \eqref{galerkinsystem} we find that
\begin{align*}
\lVert \Delta^2 \bu_N \rVert_{L^2 ((0,T); H^{-2} (\mathbb{T}^d))} &\leq K, \\
\lVert \Delta \bu_N \rVert_{L^2 ((0,T); L^2 (\mathbb{T}^d))} &\leq K, \\
\lVert \nabla \cdot (\bu_N \otimes \bu_N) \rVert_{L^2 ((0,T); H^{-1} (\mathbb{T}^d))} &\leq K, \\
\lVert \lvert \bu_N \rvert^2 \bu_N \rVert_{L^2 ((0,T); H^{-2} (\mathbb{T}^d))} &\leq \lVert \bu_N \rVert_{L^6 ((0,T); L^3 (\mathbb{T}^d))}^3 \\
&\leq \lVert \bu_N \rVert_{L^\infty ((0,T); L^2 (\mathbb{T}^d))} \lVert \bu_N \rVert_{L^4 ((0,T); L^4 (\mathbb{T}^d))}^2 \leq K,
\end{align*}
from which it follows that
\begin{equation}
\lVert \partial_t \bu_N \rVert_{L^2 ((0,T); H^{-2} (\mathbb{T}^d))} \leq K.
\end{equation}
Now by applying the Banach-Alaoglu theorem and the Aubin-Lions lemma we find that (by passing to a subsequence if necessary, which we will also label with $\{ \bu_N \}$)
\begin{align*}
&\bu_N \overset{\ast}{\rightharpoonup} \bu \quad \text{in } L^\infty ((0,T); L^2 (\mathbb{T}^d)), \\
&\bu_N \rightharpoonup \bu \quad \text{in } L^2 ((0,T); H^2 (\mathbb{T}^d)), \\
&\bu_N \rightharpoonup \bu \quad \text{in } L^4 ((0,T); L^4 (\mathbb{T}^d)), \\
&\bu_N \rightarrow \bu \quad \text{in } L^2 ((0,T); L^2 (\mathbb{T}^d)), \\
&\partial_t \bu_N \rightharpoonup \partial_t \bu \quad \text{in } L^2 ((0,T); H^{-2} (\mathbb{T}^d)),
\end{align*}
where $\bu \in L^\infty ((0,T); L^2 (\mathbb{T}^d)) \cap L^4 ((0,T); L^4 (\mathbb{T}^d)) \cap L^2 ((0,T); H^2 (\mathbb{T}^d))$. It is clear that we can pass to the limit in the weak formulation of the TTSH equations, and therefore $\bu$ is a weak solution. Now by using the Lions-Magenes lemma [see Theorem II.5.12 in~\cite{boyer2012}] we deduce that $\bu \in C([0,T];L^2 (\mathbb{T}^d))$ and that $\bu$ attains the initial data $\bu_0$. It remains to show that the weak solution is unique. Suppose that there exist two weak solutions $(\bu_1, p_1)$ and $(\bu_2, p_2)$ with the same initial data $\bu_0$\,; we denote their difference by
\begin{equation*}
v \coloneqq \bu_1 - \bu_2, \quad q \coloneqq p_1 - p_2\,.
\end{equation*}
The difference satisfies
\begin{align*}
\partial_t v &+ u_1 \cdot \nabla v + v \cdot \nabla u_2 + \nabla q + \alpha v + \Rb ( \bu_1 \lvert \bu_1 \rvert^2 - \bu_2 \lvert \bu_2 \rvert^2 ) + \Gamzero \Delta v \\
&+ \Gamtwo \Delta^2 v = 0\,.
\end{align*}
First we derive the following inequality
\begin{align*}
&-\int_{\mathbb{T}^d} ( \bu_1 \lvert \bu_1 \rvert^2 - \bu_2 \lvert \bu_2 \rvert^2 ) \cdot v dx = - \int_{\mathbb{T}^d} ( v \lvert \bu_1 \rvert^2 + \bu_2 (v \cdot (\bu_1 + \bu_2)) ) \cdot v dx \\
&\leq (\lVert \bu_1 \rVert_{L^\infty}^2 + \lVert \bu_2 \rVert_{L^\infty}^2) \lVert v \rVert_{L^2}^2\,.
\end{align*}
Therefore the difference satisfies (where again we focus on the three-dimensional case)
\begin{align*}
&\frac{1}{2} \frac{d}{d t} \lVert v \rVert_{L^2}^2 + \alpha \lVert v \rVert_{L^2}^{2} - \Gamzero \lVert \nabla v \rVert_{L^2}^2 + \Gamtwo \lVert \Delta v \rVert_{L^2}^2 \\
&\leq \lVert v \rVert_{L^3} \lVert v \rVert_{L^6} \lVert \nabla u_2 \rVert_{L^2} + (\lVert \bu_1 \rVert_{L^\infty}^2 + \lVert \bu_2 \rVert_{L^\infty}^2) \lVert v \rVert_{L^2}^{2} \\
&\leq \lVert v \rVert_{L^2}^{5/4} \lVert \Delta v \rVert_{L^2}^{3/4} \lVert \nabla u_2 \rVert_{L^2} + (\lVert \bu_1 \rVert_{L^\infty}^2 + \lVert \bu_2 \rVert_{L^\infty}^2) \lVert v \rVert_{L^2}^2 \\
&\leq \frac{1}{\Gamtwo^{3/5}} \lVert \nabla u_2 \rVert_{L^2}^{8/5} \lVert v \rVert_{L^2}^2 + \frac{1}{4} \Gamtwo \lVert \Delta v \rVert_{L^2}^2 + (\lVert \bu_1 \rVert_{L^\infty}^2 + \lVert \bu_2 \rVert_{L^\infty}^2) \lVert v \rVert_{L^2}^{2} \,,
\end{align*}
where we have used the Lions-Magenes lemma, as $\partial_t v \in L^2 ((0,T); H^{-2} (\mathbb{T}^d))$).
Again we use Young's inequality to estimate the contribution from the anti-diffusion term (similarly to estimate \eqref{antidiffusionyoungestimate}), which gives
\begin{align} \label{differenceestimate}
\frac{1}{2} \frac{d}{d t} \lVert v \rVert_{L^2}^2 &+ \alpha_{R} \lVert v \rVert_{L^2}^2 + \Rb \lVert v \rVert_{L^4 (\mathbb{T}^d)}^4 + \frac{1}{4} \Gamtwo \lVert \Delta v \rVert_{L^2}^2 \\
&\leq \Gamtwo^{-3/5} \lVert \nabla u_2 \rVert_{L^2}^{8/5} \lVert v \rVert_{L^2}^2\,. \nonumber
\end{align}
Applying the Gronwall inequality leads to the uniqueness of weak solutions, as well as the continuous dependence on the initial data.
\end{proof}
The a priori estimates in the proof of Theorem \ref{weaksolutionsthm} also lead to the global existence of strong solutions for the TTSH equations \eqref{TTSHpde}.
\begin{corollary}
Let $\bu_0 \in H^1 (\mathbb{T}^d)$ and $T > 0$, then if $d=2,3$ there exists a unique (global) strong solution $\bu \in C([0,T]; H^1 (\mathbb{T}^d)) \cap L^2 ((0,T); H^3 (\mathbb{T}^d))$ to the TTSH equations \eqref{TTSHpde}.
\end{corollary}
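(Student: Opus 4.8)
The plan is to return to the Galerkin scheme of the proof of Theorem~\ref{weaksolutionsthm}, but now taking $\bu_0\in H^1(\mathbb{T}^d)$ and propagating the higher a priori bounds already established there. Since $P_N$ is bounded on $H^1(\mathbb{T}^d)$, we have $\lVert\nabla\bu_N(\cdot,0)\rVert_{L^2}\le\lVert\nabla\bu_0\rVert_{L^2}$ uniformly in $N$. The key closed estimate is \eqref{H1estimate}, which holds in both $d=2$ and $d=3$ (the two-dimensional case being easier thanks to the cancellation \eqref{2dcancellation}): after moving the term $\alpha\lVert\nabla\bu_N\rVert_{L^2}^2$ to the right-hand side when $\alpha<0$, its right-hand side involves only $\lVert\nabla\bu_N\rVert_{L^2}^2$ together with powers of $\lVert\bu_N\rVert_{L^2}$, and the latter is controlled on $[0,T]$ by the $L^2$ estimate \eqref{L2estimate} in terms of $\lVert\bu_0\rVert_{L^2}$. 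A direct application of Gronwall's inequality then gives a bound for $\bu_N$ in $L^\infty((0,T);H^1(\mathbb{T}^d))$ that is uniform in $N$, and integrating \eqref{H1estimate} in time promotes the coercive hyper-dissipative term to a uniform bound for $\bu_N$ in $L^2((0,T);H^3(\mathbb{T}^d))$.

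Next I would bound the time derivative. Using $\bu_N\in L^\infty((0,T);H^1)\hookrightarrow L^\infty((0,T);L^6)$ (valid for $d\le 3$), one checks termwise in \eqref{galerkinsystem} that $\Gamtwo\Delta^2\bu_N\in L^2((0,T);H^{-1})$, $\Gamzero\Delta\bu_N\in L^2((0,T);H^1)$, $\nabla\cdot(\bu_N\otimes\bu_N)\in L^\infty((0,T);H^{-1})$ and $\Rb\lvert\bu_N\rvert^2\bu_N\in L^\infty((0,T);L^2)$, all uniformly in $N$, whence $\partial_t\bu_N$ is bounded in $L^2((0,T);H^{-1}(\mathbb{T}^d))$. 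With the uniform bounds for $\bu_N$ in $L^2((0,T);H^3)$ and for $\partial_t\bu_N$ in $L^2((0,T);H^{-1})$, the Banach--Alaoglu theorem and the Aubin--Lions lemma (with the compact embedding $H^3\hookrightarrow\hookrightarrow H^2\hookrightarrow H^{-1}$) produce a subsequence converging weakly-$\ast$ in $L^\infty((0,T);H^1)$, weakly in $L^2((0,T);H^3)$, and strongly in $L^2((0,T);H^2)$ to a limit $\bu$; the strong convergence is more than enough to pass to the limit in the quadratic and cubic nonlinearities, so $\bu$ is a weak solution lying in $L^\infty((0,T);H^1)\cap L^2((0,T);H^3)$. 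By the uniqueness part of Theorem~\ref{weaksolutionsthm} it coincides with the weak solution constructed there, and the argument also shows uniqueness among strong solutions.

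Finally, continuity in time in the $H^1$ topology follows from the Lions--Magenes lemma [Theorem II.5.12 in~\cite{boyer2012}], applied with the Gelfand triple $H^3\hookrightarrow H^1\hookrightarrow H^{-1}$: since $\bu\in L^2((0,T);H^3)$ and $\partial_t\bu\in L^2((0,T);H^{-1})$ and $H^1=[H^3,H^{-1}]_{1/2}$, we obtain $\bu\in C([0,T];H^1(\mathbb{T}^d))$, and the same lemma shows $\bu$ attains the initial datum $\bu_0$ in $H^1$. The only genuinely delicate ingredient is the closure of the $H^1$ estimate \eqref{H1estimate} in three dimensions, where the advective term $\int_{\mathbb{T}^d}(\bu_N\cdot\nabla\bu_N)\cdot\Delta\bu_N\,dx$ does not vanish and must be absorbed into the bi-Laplacian dissipation via the interpolation $\lVert\nabla\bu_N\rVert_{L^3}^3\lesssim\lVert\bu_N\rVert_{L^2}^{3/2}\lVert\bu_N\rVert_{\dot H^3}^{3/2}$; but this was already carried out in the proof of Theorem~\ref{weaksolutionsthm}, so for the corollary it suffices to reuse \eqref{H1estimate} and \eqref{H1absorbingball} and perform the Galerkin limit described above, the remaining steps being entirely routine.
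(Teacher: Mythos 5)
Your proposal is correct and follows essentially the same route as the paper, which simply invokes the $H^1$ estimate \eqref{H1estimate} together with \eqref{H1absorbingball}, the Banach--Alaoglu theorem, the Aubin--Lions lemma, and the uniqueness already proved in Theorem \ref{weaksolutionsthm}. You have merely written out in full the Gronwall closure, the time-derivative bound, and the Lions--Magenes step that the paper leaves implicit.
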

\begin{proof}
The regularity of the strong solution can be obtained from estimate \eqref{H1estimate} as well as estimate \eqref{H1absorbingball}, and by again applying the Banach-Alaoglu theorem, as well as the Aubin-Lions lemma. The uniqueness has already been shown in the proof of Theorem \ref{weaksolutionsthm}.
\end{proof}
\begin{corollary} \label{absorbingballresult}
The TTSH equations possess absorbing balls in $L^2 (\mathbb{T}^d)$, $H^1 (\mathbb{T}^d)$ and $H^2 (\mathbb{T}^d)$ if $d = 2,3$. In particular, for any $\bu_0 \in L^2 (\mathbb{T}^d)$, there exists a time $T^* > 0$ such that for any $t \geq T^*$
\begin{align}
\lVert \bu (\cdot, t) \rVert_{L^2} &\leq 2 k_1, \\
\lVert \nabla \bu (\cdot, t) \rVert_{L^2} &\leq 2 k_3, \\
\lVert \Delta \bu (\cdot, t) \rVert_{L^2} &\leq 2 k_4\,,
\end{align}
where $k_1$, $k_3$ and $k_4$ depend on the system parameters (i.e. $\Gamzero,\Gamtwo,\alpha$ and $\Rb$).
\end{corollary}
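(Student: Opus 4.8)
The plan is to observe that \emph{every} a priori bound obtained in the proof of Theorem \ref{weaksolutionsthm} was derived at the level of the Galerkin approximations $\bu_N$ with constants independent of $N$, and then to transfer these bounds to the limiting solution $\bu$. Concretely, inequalities \eqref{absorbingballL2}, \eqref{H1absorbingball} and \eqref{H2absorbingball} assert that, for all $t \geq T^* + c_0$ with $c_0$ an absolute constant, one has $\lVert \bu_N(\cdot,t)\rVert_{L^2}^2 \leq k_1$, $\lVert \nabla\bu_N(\cdot,t)\rVert_{L^2}^2 \leq k_3$ and $\lVert \Delta\bu_N(\cdot,t)\rVert_{L^2}^2 \leq k_4$, with $k_1,k_3,k_4$ depending only on $\Gamzero,\Gamtwo,\alpha,\Rb$. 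Moreover the explicit Gronwall bound displayed just before \eqref{absorbingballL2} shows that $T^*$ can be chosen uniformly over any bounded set of initial data in $L^2(\mathbb{T}^d)$, which is precisely what is required of an absorbing time.

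First I would pass to the limit $N\to\infty$. Along the subsequence extracted in the proof of Theorem \ref{weaksolutionsthm} one has $\bu_N\to\bu$ strongly in $L^2((0,T);L^2(\mathbb{T}^d))$ and $\bu_N\rightharpoonup\bu$ weakly in $L^2((0,T);H^2(\mathbb{T}^d))$; hence, for almost every $t$, a further subsequence satisfies $\bu_N(\cdot,t)\rightharpoonup\bu(\cdot,t)$ weakly in $H^2(\mathbb{T}^d)$, and the weak lower semicontinuity of the $L^2$, $\dot H^1$ and $\dot H^2$ norms turns the three Galerkin bounds above into the corresponding bounds for $\bu(\cdot,t)$, valid for almost every $t$ past the absorbing time. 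To upgrade ``almost every $t$'' to ``every $t$'' I would invoke the regularity already proved: by the corollary on strong solutions established above, $\bu\in C([0,T];H^1(\mathbb{T}^d))$, and a standard bootstrap (restarting the evolution from a positive time and using estimate \eqref{H2absorbingball}) gives $\bu\in C((0,\infty);H^2(\mathbb{T}^d))$; continuity then propagates the a.e. bounds to all $t\geq T^*+c_0$. The harmless factors of $2$ in the statement absorb both this upgrade and any discrepancy between the Galerkin constants $k_1,k_3,k_4$ and the radii one ends up with.

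The one point that needs care — and which I regard as the only real obstacle, albeit a routine one — is that one should \emph{not} try to re-derive the $H^1$ and $H^2$ differential inequalities directly on $\bu$: testing the TTSH equations against $-\Delta\bu$, respectively $\Delta^2\bu$, would require $\bu(\cdot,t)\in H^3$, respectively $H^4$, pointwise in $t$, which a weak (or even a strong) solution need not possess. Keeping all the energy-type estimates at the Galerkin level, where each $\bu_N$ is smooth and the integrations by parts together with the identity from \cite{robinson2014} are unambiguous, and only invoking Banach--Alaoglu and weak lower semicontinuity at the very end, circumvents this entirely. With the three bounds in hand, the existence of absorbing balls in $L^2(\mathbb{T}^d)$, $H^1(\mathbb{T}^d)$ and $H^2(\mathbb{T}^d)$ — and hence the displayed inequalities — follows immediately.
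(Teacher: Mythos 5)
Your argument is correct and is essentially the paper's own proof: the paper simply observes that the weak solution constructed in Theorem \ref{weaksolutionsthm} inherits the $N$-uniform Galerkin estimates \eqref{absorbingballL2}, \eqref{H1absorbingball} and \eqref{H2absorbingball}, which is exactly the limit-passage you carry out in detail via weak lower semicontinuity and the continuity of the solution. Your additional remarks (uniformity of $T^*$ over bounded sets, keeping the energy estimates at the Galerkin level rather than testing the weak solution directly) are sound elaborations of what the paper leaves implicit.
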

\begin{proof}
The existence of these absorbing balls follows from observing that the (unique) weak solution $\bu$ constructed in Theorem \ref{weaksolutionsthm} also obeys estimates \eqref{absorbingballL2}, \eqref{H1absorbingball} and \eqref{H2absorbingball}.
\end{proof}
We recall that the global attractor is the largest compact invariant set, which attracts all bounded subsets. Using the existence of the absorbing balls proved in Corollary \ref{absorbingballresult}, in the next theorem we prove that the TTSH equations possess a global attractor in two and three dimensions.
\begin{theorem} \label{globalattractorthm}
The TTSH equations possess a global attractor $\Att_{L^2}$ in $L^2 (\mathbb{T}^d)$ and $\Att_{H^1}$ in $H^1 (\mathbb{T}^d)$ respectively for $d=2,3$, which can be characterized by 
\begin{align}
\Att_{L^2} &= \bigcap_{t > 0} \left(\overline{\bigcup_{t' \geq t} S(t') B_{2(k_1 + k_3), H^1}}^{L^2}\right), \\
\Att_{H^1} &= \bigcap_{t > 0} \left(\overline{\bigcup_{t' \geq t} S(t') B_{2(k_1 + k_3 + k_4), H^2}}^{H^1}\right),
\end{align}
where $B_{2(k_1 + k_3), H^1}$ and $B_{2(k_1 + k_3 + k_4), H^2}$ are balls of radii $2(k_1 + k_3)$ and $2(k_1 + k_3 + k_4)$ centred around $0$ in, respectively, the (strong) $L^2$- or $H^1$-topologies. Moreover, it holds that
\begin{equation}
\Att_{L^2} = \Att_{H^1},
\end{equation}
as subsets of $L^2 (\mathbb{T}^d)$ (and hence also of $H^1 (\mathbb{T}^d)$).
\end{theorem}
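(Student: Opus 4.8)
The plan is to invoke the classical theory of global attractors for dissipative semigroups (as in Temam's or Robinson's books), whose two hypotheses — a continuous solution semigroup and a compact absorbing set — are essentially already established in the preceding results. First I would record that Theorem~\ref{weaksolutionsthm} and its corollary provide a well-defined solution semigroup $\{S(t)\}_{t\geq 0}$ acting on both $L^2(\mathbb{T}^d)$ and $H^1(\mathbb{T}^d)$: the Gronwall argument applied to \eqref{differenceestimate} (together with the analogous estimate for $\nabla v$) yields continuous dependence on the initial data in each of these phase spaces, while continuity in $t$ follows from the regularity classes $C([0,T];L^2)\cap L^2((0,T);H^2)$ and $C([0,T];H^1)\cap L^2((0,T);H^3)$.

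Next I would upgrade the absorbing balls of Corollary~\ref{absorbingballresult} to compact absorbing sets. On the unit torus one has $\lVert\bu\rVert_{H^1}\leq\lVert\bu\rVert_{L^2}+\lVert\nabla\bu\rVert_{L^2}$ and $\lVert\bu\rVert_{H^2}^2=\lVert\bu\rVert_{L^2}^2+\lVert\nabla\bu\rVert_{L^2}^2+\lVert\Delta\bu\rVert_{L^2}^2$, so the bounds $\lVert\bu\rVert_{L^2}\leq 2k_1$, $\lVert\nabla\bu\rVert_{L^2}\leq 2k_3$, $\lVert\Delta\bu\rVert_{L^2}\leq 2k_4$ for $t\geq T^*+2$ show that $B_{2(k_1+k_3),H^1}$ is absorbing in the $L^2$-metric and $B_{2(k_1+k_3+k_4),H^2}$ is absorbing in the $H^1$-metric. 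By the Rellich–Kondrachov compactness of $H^1\hookrightarrow L^2$ and $H^2\hookrightarrow H^1$ (and weak lower semicontinuity of the Sobolev norms, which makes these closed balls closed in the weaker topology), they are compact in the respective weaker spaces. The standard theorem then gives that the $\omega$-limit sets $\bigcap_{t>0}\overline{\bigcup_{t'\geq t}S(t')B_{2(k_1+k_3),H^1}}^{L^2}$ and $\bigcap_{t>0}\overline{\bigcup_{t'\geq t}S(t')B_{2(k_1+k_3+k_4),H^2}}^{H^1}$ are precisely the global attractors $\Att_{L^2}$ and $\Att_{H^1}$, which is the asserted characterization.

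For the identification $\Att_{L^2}=\Att_{H^1}$ I would use the usual invariance-plus-attraction dichotomy. The inclusion $\Att_{H^1}\subseteq\Att_{L^2}$ is immediate: $\Att_{H^1}$ is $L^2$-bounded and $S(t)$-invariant, so $\operatorname{dist}_{L^2}(\Att_{H^1},\Att_{L^2})=\operatorname{dist}_{L^2}(S(t)\Att_{H^1},\Att_{L^2})\to 0$, and $\Att_{L^2}$ is $L^2$-closed. For the reverse inclusion the point is that the absorbing times in Theorem~\ref{weaksolutionsthm} depend only on $\lVert\bu_0\rVert_{L^2}$; applying the chain of estimates \eqref{absorbingballL2}, \eqref{L2absorbingball1}, \eqref{H1absorbingball}, \eqref{H2absorbingball} to the $L^2$-bounded invariant set $\Att_{L^2}$ and using $\Att_{L^2}=S(T^*+2)\Att_{L^2}$ shows that $\Att_{L^2}$ is bounded in $H^2$, hence (being $L^2$-closed and $H^2$-bounded) compact in $H^1$; since it is also $S(t)$-invariant, attraction by $\Att_{H^1}$ in the $H^1$-metric forces $\Att_{L^2}\subseteq\Att_{H^1}$. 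Combining the two inclusions completes the proof.

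The main obstacle is this last regularity bootstrap: one must check carefully that the $H^1$- and $H^2$-absorbing bounds — obtained in the proof of Theorem~\ref{weaksolutionsthm} only after the auxiliary delays $T^*+1$ and $T^*+2$ and through a cascade of interpolation estimates — apply uniformly along every trajectory issuing from the invariant set $\Att_{L^2}$, so that invariance genuinely yields $H^2$-boundedness of $\Att_{L^2}$. Once this is in place, the remaining arguments are a routine application of the abstract attractor machinery and of compact Sobolev embeddings.
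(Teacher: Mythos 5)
Your proposal is correct and follows essentially the same route as the paper: the existence and characterization of $\Att_{L^2}$ and $\Att_{H^1}$ via the compact absorbing sets of Corollary \ref{absorbingballresult} together with the standard theorem (Theorem 1.1 in Temam, Chapter 1, or Theorem 10.5 in Robinson), and the identification $\Att_{L^2}=\Att_{H^1}$ via the same two inclusions (the paper cites Lemma 12.6 in Robinson): $\Att_{L^2}$ is $H^2$-bounded hence $H^1$-compact and invariant, so it lies in the maximal such set $\Att_{H^1}$, while $\Att_{H^1}$ is an $L^2$-compact invariant set and hence lies in $\Att_{L^2}$. The regularity-bootstrap point you flag as the main obstacle is exactly what the paper resolves by noting that the absorbing-ball bounds of Corollary \ref{absorbingballresult} hold for all $t\geq T^*$ and the attractor is invariant, so no further argument is needed.
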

\begin{proof}
Let us define the mapping $S : \mathbb{R}_+ \times L^2 (\mathbb{T}^d) \rightarrow L^2 (\mathbb{T}^d)$ by $S_t (\bu_0) \mapsto \bu (t)$; we have shown in the proof of Theorem \ref{weaksolutionsthm} that this mapping defines a semigroup. Moreover, we have shown in Corollary \ref{absorbingballresult} that there exists a compact absorbing ball for the TTSH equations. Therefore, by Theorem 1.1 in \cite[Chapter 1]{temam1997} (or Theorem 10.5 in \cite{robinson2001}), it follows that the TTSH equations possess a compact global attractor in $L^2 (\mathbb{T}^d)$. Similarly, one can show that the equations also possess a global attractor in $H^1 (\mathbb{T}^d)$.
\par\smallskip
Now we proceed to show that $\Att_{L^2} = \Att_{H^1}$. We follow the approach of the proof of Lemma 12.6 in \cite{robinson2001}. From Corollary \ref{absorbingballresult} it follows that $\Att_{L^2}$ is a bounded set in $H^2 (\mathbb{T}^d)$ and hence $\Att_{L^2}$ is a compact set in $H^1 (\mathbb{T}^d)$. Because $\Att_{H^1}$ is the largest compact invariant set in $H^1 (\mathbb{T}^d)$, it follows that $\Att_{L^2} \subset \Att_{H^1}$. The inclusion $\Att_{H^1} \subset \Att_{L^2}$ follows because $\Att_{H^1}$ is a compact subset of $L^2 (\mathbb{T}^d)$ which is invariant under the semigroup.
\end{proof}
\begin{remark}
We have shown explicitly in Theorem \ref{globalattractorthm} that the attractors $\Att_{L^2}$ and $\Att_{H^1}$ are the same because, in the analytical part of this paper (in Section \ref{sect:3}), we will work with the velocity formulation of the TTSH equations, whereas, in the numerical part (in Section \ref{Kiran}), we will work with the vorticity formulation of the equations.
\end{remark}

\section{\large Estimates for the dimension of the global attractor $\Att$}\label{sect:3}

\subsection{Trace formulae and Lieb-Thirring inequalities}\label{LTin}

The concept of Lyapunov exponents combined with the Kaplan-Yorke formula originated as a phase space argument for ODEs \cite{KaplanYorke}. These exponents control the exponential growth or contraction of volume elements in a finite dimensional phase space. The Kaplan-Yorke formula expresses the balance between volume growth and contraction realized on the attractor in phase space and is used to define its Lyapunov dimension.  It has been rigorously applied to global attractors in dissipative PDEs by Constantin, Foias and Temam \cite{ConstantinFoiasattractors,CFT2,ConstantinFoias,DGbook}.  The formula is the following\,: let $\mu_{n}$ be the Lyupaunov exponents, the Lyapunov dimension $d_{L}(\mathcal{A})$ is defined by (cf. Theorem 3.3 and Remark 3.5 in \cite[Chapter 5]{temam1997})
\bel{dldef}
d_{L} = N_{0} -1 + \frac{\mu_{1}+\ldots + \mu_{N_{0}-1}}{-\mu_{N_{0}}}\,,
\ee           
where the number $N= N_{0}$ is chosen such that
\bel{KY2}
\sum_{n=1}^{N_{0}-1}\mu_{n}\geq 0 
\hspace{1cm}\mbox{but}\hspace{1cm}
\sum_{n=1}^{N_{0}}\mu_{n} < 0\,.
\ee
Note that, according to the definition of $N_{0}$, the ratio of exponents in (\ref{dldef}) satisfies
\bel{KY1}
0 \leq \frac{\mu_{1}+\ldots + \mu_{N_{0}-1}}{-\mu_{N_{0}}} < 1\,,
\ee          
so the Kaplan-Yorke formula generally yields a non-integer dimension such that 
\bel{KY3}
N_{0} - 1 \leq d_{L} < N_{0}\,. 
\ee
In simple terms, the value of $N$ that turns the sign of the sum of the Lyapunov exponents of the dynamical system as in (\ref{KY2}) is that value of $N$, namely $N_{0}$, that bounds above $d_{L}$ and also the Hausdorff dimension $d_{H}$ and fractal dimension $d_{F}$ of the attractor \cite{ConstantinFoiasattractors,CFT2,ConstantinFoias,DGbook} (cf. Theorem 3.3 and Remark 3.5 in \cite[Chapter 5]{temam1997}).
\par\smallskip
To use the method for PDEs it is necessary to extend the idea of the Lyapunov exponents to global exponents $\mu_{n}$ of $\mathcal{A}$.  In case of the TTSH equations the phase space is given by $W \coloneqq \{ \bu \in L^2 (\mathbb{T}^d) \, \lvert \, \nabla \cdot \bu = 0 \}$, which is infinite dimensional. The space $W$ possesses an orthonormal basis. In effect, the solution $\bu(t)$ forms an orbit in the space $W$.  Now we take different sets of initial conditions
$\bu(0) + \epsilon \delta\bu_{i}(0)$ (for positive $\epsilon \ll 1$) which evolve at the linear level into $\bu(t)+\epsilon \delta\bu_{i}(t)$, where the $\delta\bu_{i}(0)$ are (distinct) elements of the orthonormal basis of $W$ for $i =1,\ldots,N$. In other words, we have
\bel{trace1}
\partial_t (\delta \bu) =\mathcal{M}\delta \bu, \quad \delta \bu \lvert_{t = 0} = \delta\bu_{i}(0)\,,
\ee
where $\mathcal{M}$ is the linearised operator of the spatial part of the TTSH equations around the solution $\bu (t)$. If they are chosen to be linearly independent, at the initial time these $\delta\bu_{i}$ form an $N$-volume or parallelepiped of volume
\bel{vol1}
V_{N}(t) = \left|\delta\bu_{1}\wedge\delta\bu_{2}\ldots\wedge\delta\bu_{N}\right|
\ee
which changes along the orbit so we need to analyze its time evolution. We define the operator $\mathbf{P}_{N}(t)$ as the $L^{2} (\mathbb{T}^d)$-orthogonal projection onto the finite dimensional subspace $\{\delta\bu_{1},\delta\bu_{2},\ldots,\delta\bu_{N}\}$.  This is given by\footnote{Whether one uses 
$\mathbf{P}_{N}\mathcal{M}\mathbf{P}_{N}$ or $\mathcal{M}\mathbf{P}_{N}$ as in \cite{DGbook} is purely notational.} 
\bel{vol2a}
\dot{V}_{N} = V_{N} Tr \left[\mathbf{P}_{N}\mathcal{M}\mathbf{P}_{N}\right] ,
\ee
which is easily solved to give  
\bel{trace2} 
V_{N}(T) = V_{N}(0) 
\exp \left( \int_{0}^{T}
Tr \left[\mathbf{P}_{N}\mathcal{M}\mathbf{P}_{N}\right](t)dt\right)\,.
\ee
See \cite{robinson2001,DGbook} for more details. Next, we introduce the finite-time average $\left<\cdot\right>_T$, which is defined as follows
\begin{equation}
\left< f \right>_T \coloneqq \frac{1}{T} \int_0^T f(t) dt,
\end{equation}
where $f : [0,T] \rightarrow \mathbb{R}$ is an $L^1 ((0,T))$ function.
The sum of the first $N$ global Lyapunov exponents is then bounded by (see \cite{temam1997} for more details)
\bel{global1}
\sum_{n=1}^{N}\mu_{n} \leq \left<Tr \left[\mathbf{P}_{N}\mathcal{M}\mathbf{P}_{N}\right]\right>_{T}\,, 
\ee
where we are projecting onto the orthonormal set $\{\bphi_{n}\}_{n=1}^N$ (which is a subset of the orthonormal basis for $W$) such that 
\bel{PNdef}
Tr\left[\mathbf{P}_{N}\mathcal{M}\mathbf{P}_{N}\right]\ = \sum_{n=1}^{N} \int_{\mathbb{T}^d} \bphi_{n}\mathcal{M}\bphi_{n}\,dV \,.
\ee
We want to find the value of $N_{0}$ that turns the sign of $\left<Tr\left[\mathbf{P}_{N}\mathcal{M}\mathbf{P}_{N}\right]\right>_T$ such that volume elements contract to zero. This value of $N_{0}$ bounds the Lyapunov dimension $d_{L}(\mathcal{A})$ (and also the Hausdorff and fractal dimensions) of the attractor as defined above in \eqref{dldef}.
\par\smallskip
Because the $\bphi_{n}$ are orthonormal in the space $W$, they obey the relations
\bel{trrelns1}
Tr \left[\mathbf{P}_{N}(-\Delta)\mathbf{P}_{N}\right] 
= \sum_{n=1}^{N} \int_{\mathbb{T}^d} |\nabla\bphi_{n}|^{2}\,dV
\ee
and we also have
\bel{trrelns2}
\sum_{n=1}^{N} \int_{\mathbb{T}^d} |\bphi_{n}|^{2}\,dV = N\,.
\ee
Fortunately, the functions $\bphi_{n}$ have some useful properties of which we can take advantage to estimate the relative magnitudes of the negative and positive contributions to the Lyapunov exponents.  In $d$ spatial dimensions the $\bphi_{n}$ satisfy what are known as Lieb-Thirring inequalities \cite{ConstantinFoias,Lieb1,Ruelletrace} for all orthonormal functions, which we state in the next lemma.
\begin{lemma}
Let $\{\bphi_1, \ldots, \bphi_N \}$ be an orthonormal set in $\{ \bu \in L^2 (\mathbb{T}^d) \, \lvert \, \nabla \cdot \bu = 0 \}$, then the following inequalities (known as the Lieb-Thirring inequalities) hold
\begin{align}\label{LT1A}
\int_{\mathbb{T}^d} \left(\sum_{n=1}^{N}|\bphi_{n}|^{2}\right)^{\frac{d+2}{d}}\,dV
&\leq c\,\sum_{n=1}^{N} \int_{\mathbb{T}^d} |\nabla\bphi_{n}|^{2}\,dV + c\,;\\
\int_{\mathbb{T}^d} \left(\sum_{n=1}^{N}|\bphi_{n}|^{2}\right)^{\frac{d+4}{d}}\,dV
&\leq c\,\sum_{n=1}^{N} \int_{\mathbb{T}^d} |\Delta \bphi_{n}|^{2}\,dV  + c\,,\label{LT1B}
\end{align}
where, $c$ is independent of $N$.  
\end{lemma}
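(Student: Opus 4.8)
The plan is to deduce both inequalities from the Lieb--Thirring inequalities for (sub-)orthonormal families that are already available in the literature on bounded domains and on the torus; the only delicate point is the contribution of the zero Fourier mode, which is precisely what forces the additive constant $c$ (such a term is absent in the whole-space versions). Two preliminary remarks make everything transparent. First, the divergence-free constraint is irrelevant here: the inequalities hold for arbitrary orthonormal families of vector fields, and passing to the subspace $W$ only makes the family smaller; moreover the vector-valued statement reduces to the scalar one componentwise, since for each component index $j$ the scalar family $\{\phi_n^{j}\}_{n=1}^N$ is \emph{sub-orthonormal}, i.e.\ $\sum_n|\langle f,\phi_n^{j}\rangle_{L^2}|^2\le\|f\|_{L^2}^2$ for all $f$ (a consequence of the orthonormality of $\{\bphi_n\}$, taking $f$ into the $j$-th slot), while $\sum_n|\bphi_n|^2=\sum_{j=1}^d\sum_n|\phi_n^{j}|^2$ and convexity gives $\int_{\mathbb{T}^d}(\sum_j\rho_j)^p\,dV\le d^{p-1}\sum_j\int_{\mathbb{T}^d}\rho_j^p\,dV$ for $p\ge1$. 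Second — and this is the essential device — I split off the mean: writing $\bphi_n=P_0\bphi_n+\widetilde{\bphi}_n$ with $P_0$ the $L^2$-orthogonal projection onto the constant fields and $\widetilde{\bphi}_n$ the mean-free part, one has $\nabla\widetilde{\bphi}_n=\nabla\bphi_n$ and $\Delta\widetilde{\bphi}_n=\Delta\bphi_n$, the family $\{\widetilde{\bphi}_n\}$ is again sub-orthonormal since $\sum_n\widetilde{\bphi}_n\otimes\widetilde{\bphi}_n=(I-P_0)\bigl(\sum_n\bphi_n\otimes\bphi_n\bigr)(I-P_0)\le I$, and for the constant parts the trace bound $\sum_n|P_0\bphi_n|^2=\mathrm{Tr}\bigl(P_0\sum_n\bphi_n\otimes\bphi_n\bigr)\le\mathrm{Tr}(P_0)=d$ holds pointwise.

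Given this, inequality \eqref{LT1A} follows by applying the Lieb--Thirring inequality \emph{on the torus} to the mean-free family $\{\widetilde{\bphi}_n\}$ — for which no additive term is needed, mean-freeness being equivalent to a Poincar\'e inequality — giving $\int_{\mathbb{T}^d}\bigl(\sum_n|\widetilde{\bphi}_n|^2\bigr)^{(d+2)/d}\,dV\le c\sum_n\int_{\mathbb{T}^d}|\nabla\bphi_n|^2\,dV$, and then combining this with $|\bphi_n|^2\le 2|P_0\bphi_n|^2+2|\widetilde{\bphi}_n|^2$, the convexity estimate $(a+b)^p\le 2^{p-1}(a^p+b^p)$ with $p=(d+2)/d$, the pointwise bound $\sum_n|P_0\bphi_n|^2\le d$, and the componentwise summation above. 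Inequality \eqref{LT1B} is obtained by the identical argument, now using the fourth-order Lieb--Thirring inequality
\begin{equation*}
\int_{\mathbb{T}^d}\Bigl(\sum_{n=1}^N|\widetilde{\bphi}_n|^2\Bigr)^{\frac{d+4}{d}}\,dV\;\le\;c\sum_{n=1}^N\int_{\mathbb{T}^d}|\Delta\bphi_n|^2\,dV
\end{equation*}
in place of the second-order one; this is the $s=2$ member of the standard family of Lieb--Thirring inequalities for the polyharmonic operators $(-\Delta)^s$. Every constant produced in this way is independent of $N$, since the only possible sources of $N$-dependence are the right-hand sides of the Lieb--Thirring inequalities and $\mathrm{Tr}(P_0)$, and both are controlled uniformly.

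The step that I expect to carry all the weight — everything else being elementary bookkeeping — is the torus Lieb--Thirring inequality for mean-free (sub-)orthonormal families, at both orders one and two, with an $N$-independent constant. I would not reprove it: I would quote it in the forms already established on bounded domains and on $\mathbb{T}^d$ in \cite{ConstantinFoias,Ruelletrace,temam1997} (and, for the polyharmonic version, the general Lieb--Thirring family in \cite{Lieb1,CFT2}), and simply observe that the mean-zero decomposition above is exactly the mechanism that isolates the zero-mode contribution into the single additive term $+c$, so that no further estimate is needed.
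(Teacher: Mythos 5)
Your argument is correct, and it fills in details that the paper deliberately omits: the paper's entire ``proof'' is the citation to Temam's Appendix (Theorem 4.1 and Remarks 4.2--4.3), where these inequalities are already stated for vector-valued, sub-orthonormal families on bounded/periodic domains with the additive term. So both you and the paper ultimately rest on the same literature result; the difference is that you reconstruct the deduction rather than quoting it wholesale. What your version buys is an explicit explanation of why the additive term is a single $N$-independent constant $+c$ rather than something like $+cN$: the splitting $\bphi_n=P_0\bphi_n+\widetilde{\bphi}_n$, the operator bounds $\sum_n|P_0\bphi_n|^2\le\mathrm{Tr}(P_0)=d$ and $\sum_n\widetilde{\bphi}_n\otimes\widetilde{\bphi}_n\le I$, and the Poincar\'e inequality for mean-free fields together isolate the zero mode as the sole source of the constant. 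Your componentwise reduction via sub-orthonormality (Bessel applied to $f e_j$) and the convexity estimates are also correct, though strictly unnecessary if one quotes the vector-valued sub-orthonormal form of the theorem directly, as the paper does. The only point I would flag is that you should make sure the reference you invoke for the fourth-order, mean-free, sub-orthonormal case on $\mathbb{T}^d$ (needed for \eqref{LT1B}) actually covers that combination with an $N$-independent constant --- it does (Temam's Appendix treats general order $m$), but since that single citation carries all the weight of your proof, it deserves a precise pointer rather than a generic one.
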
\noindent
The proof of these inequalities can be found in \cite[Appendix, Theorem 4.1, Remarks 4.2 and 4.3]{temam1997}.
\par\smallskip
In addition, the Cauchy-Schwarz inequality (both for sums and integrals) ensures that
\bel{LT2}
\left(\sum_{n=1}^{N} \int_{\mathbb{T}^d} |\nabla\bphi_{n}|^{2}\,dV\right)^{2} 
\leq \left(\sum_{n=1}^{N} \int_{\mathbb{T}^d} |\Delta\bphi_{n}|^{2}\,dV\right)
\left(\sum_{n=1}^{N} \int_{\mathbb{T}^d} |\bphi_{n}|^{2}\,dV\right)\,.
\ee
Thus, we have 
\bel{LT3}
Tr\left[\mathbf{P}_{N}(\Delta^2)\mathbf{P}_{N}\right] \geq 
N^{-1}\left(Tr\left[\mathbf{P}_{N}(-\Delta)\mathbf{P}_{N}\right]\right)^{2}\,.
\ee
Moreover, it is well known that for the first $N$ eigenvalues $\lambda_{n}$ of the Stokes operator (in our case with $L=1$)
\bel{LT4}
\sum_{n=1}^{N} \int_{\mathbb{T}^d} |\nabla\bphi_{n}|^{2}\,dV 
= Tr\left[\mathbf{P}_{N}(-\Delta)\mathbf{P}_{N}\right] 
\geq \sum_{j=1}^{N}\lambda_{j} \geq c \sum_{j=1}^{N}j^{2/d}\,.
\ee
Hence we have
\bel{LT5}
Tr\left[{\bf P}_{N}(-\Delta ){\bf P}_{N}\right] \geq c\,N^{\frac{d+2}{d}}\,;
\ee
and it also holds that
\bel{LT6}
Tr\left[\mathbf{P}_{N}(\Delta^2)\mathbf{P}_{N}\right] \geq c\,N^{1+\frac{4}{d}}\,.
\ee
These results can be found in Corollary 4.1 in \cite[Appendix]{temam1997}. Finally, as a preliminary result necessary for proving the estimate on the Lyapunov dimension of the global attractor, we need to show the differentiability of the semigroup. We recall that $\mathcal{M}$ is defined as the linearised operator corresponding to a solution $\bu$, which is the solution map of the following equation (as a map from $L^2 (\mathbb{T}^d)$ to itself):
\begin{align}
\left(\partial_{t} + \bu\cdot\nabla\right)\delta\bu + \delta\bu\cdot\nabla\bu + \nabla \delta p 
&= -\left\{\alpha + \Gamzero \Delta + \Gamtwo \Delta^{2}\right\}\delta\bu\nonumber\\
&- \Rb\,\lvert \bu \rvert^{2}\delta\bu - 2\Rb\,\bu(\delta \bu \cdot \bu )\,. \label{linearisedequation}
\end{align}
\begin{lemma} 
The semigroup $S$ is uniformly differentiable on the attractor $\Att$, i.e. for every $t > 0$ it holds that
\begin{equation*}
\sup_{\bu_0, \bv_0 \in \Att, \; \lVert \bu_0 - \bv_0 \rVert_{L^2} < \epsilon} \frac{\lVert S(t) \bu_0 - S(t) \bv_0 - \mathcal{M} (\bu_0 - \bv_0) \rVert_{L^2}}{\lVert\bu_0 - \bv_0 \rVert_{L^2}} \xrightarrow[]{\epsilon \rightarrow 0} 0\,.
\end{equation*}
To be more precise, we will show that
\begin{equation}
\lVert S(t) \bu_0 - S(t) \bv_0 - \mathcal{M} (\bu_0 - \bv_0) \rVert_{L^2} \leq K \lVert \bu_0 - \bv_0 \rVert_{L^2}^2,
\end{equation}
where the constant $K$ can depend on $t$ and $\mathcal{M}$ is the linearised operator that was introduced previously in equation \eqref{linearisedequation}.
\end{lemma}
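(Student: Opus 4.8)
The plan is to compare the two nonlinear trajectories with the linearised one by deriving and estimating the equation satisfied by their difference. Write $\bu(t)=S(t)\bu_0$, $\bv(t)=S(t)\bv_0$, and let $\delta\bu(t)\coloneqq\mathcal{M}(\bu_0-\bv_0)$ be the solution of the linearised system \eqref{linearisedequation} about $\bu$ with initial datum $\delta\bu(0)=\bu_0-\bv_0$. Set $\xi\coloneqq\bu-\bv$ and $w\coloneqq\xi-\delta\bu$, so that $w(0)=0$ and $\lVert S(t)\bu_0-S(t)\bv_0-\mathcal{M}(\bu_0-\bv_0)\rVert_{L^2}=\lVert w(t)\rVert_{L^2}$; it thus suffices to prove $\lVert w(t)\rVert_{L^2}^2\leq K\lVert\bu_0-\bv_0\rVert_{L^2}^4$. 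Subtracting the TTSH equations for $\bu$ and $\bv$ and substituting $\bv=\bu-\xi$, the advective and cubic nonlinearities split as $\bu\cdot\nabla\bu-\bv\cdot\nabla\bv=\bu\cdot\nabla\xi+\xi\cdot\nabla\bu-\xi\cdot\nabla\xi$ and $\bu\lvert\bu\rvert^2-\bv\lvert\bv\rvert^2=\lvert\bu\rvert^2\xi+2\bu(\bu\cdot\xi)+R$, where $R\coloneqq-\bu\lvert\xi\rvert^2-2\xi(\bu\cdot\xi)+\xi\lvert\xi\rvert^2$ collects the terms of degree $\geq 2$ in $\xi$. Since the linear-in-$\xi$ parts coincide exactly with the right-hand side of \eqref{linearisedequation}, subtracting \eqref{linearisedequation} leaves
\begin{align*}
\partial_t w &+ \bu\cdot\nabla w + w\cdot\nabla\bu + \nabla\pi + \alpha w + \Gamzero\Delta w + \Gamtwo\Delta^2 w + \Rb\lvert\bu\rvert^2 w + 2\Rb\,\bu(\bu\cdot w) \\
&= \xi\cdot\nabla\xi - \Rb\,R \eqqcolon F\,,
\end{align*}
for a suitable pressure $\pi$; crucially $F$ is at least quadratic in $\xi$. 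I would then run a Gr\"onwall argument in $L^2$, tracking carefully how $F$ depends on $\lVert\bu_0-\bv_0\rVert_{L^2}$.

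Taking the $L^2(\mathbb{T}^d)$ inner product of the $w$-equation with $w$, the pressure term and $\bu\cdot\nabla w$ vanish by incompressibility, the cubic terms $\Rb\lVert\lvert\bu\rvert\,w\rVert_{L^2}^2+2\Rb\lVert\bu\cdot w\rVert_{L^2}^2$ have a favourable sign and are discarded, and the anti-diffusion term is absorbed into the hyperdiffusion exactly as in \eqref{antidiffusionyoungestimate}, giving $\Gamzero\lVert\nabla w\rVert_{L^2}^2\leq\tfrac14\Gamtwo\lVert\Delta w\rVert_{L^2}^2+\Gamzero^2\Gamtwo^{-1}\lVert w\rVert_{L^2}^2$. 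The destabilising term $\int_{\mathbb{T}^d}(w\cdot\nabla\bu)\cdot w\,dx$ is bounded, in the three-dimensional case, by $C\lVert\nabla\bu\rVert_{L^3}\lVert w\rVert_{L^2}^{3/2}\lVert\Delta w\rVert_{L^2}^{1/2}$ via Gagliardo--Nirenberg, and since $\bu$ stays in an absorbing ball in $H^2(\mathbb{T}^d)$ (Corollary \ref{absorbingballresult}) one has $\lVert\nabla\bu\rVert_{L^3}\leq C$; Young's inequality then puts it into $\tfrac14\Gamtwo\lVert\Delta w\rVert_{L^2}^2+C\lVert w\rVert_{L^2}^2$. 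For the forcing, integration by parts gives $\int(\xi\cdot\nabla\xi)\cdot w\,dx=-\int(\xi\cdot\nabla w)\cdot\xi\,dx\leq\lVert\xi\rVert_{L^4}^2\lVert\nabla w\rVert_{L^2}$, which after $\lVert\nabla w\rVert_{L^2}\leq\lVert w\rVert_{L^2}^{1/2}\lVert\Delta w\rVert_{L^2}^{1/2}$ and Young is bounded by $\tfrac14\Gamtwo\lVert\Delta w\rVert_{L^2}^2+C\lVert\xi\rVert_{L^4}^4+C\lVert w\rVert_{L^2}^2$; the quadratic part of $-\Rb R$ contributes at most $C\lVert\bu\rVert_{L^\infty}^2\lVert\xi\rVert_{L^4}^4+C\lVert w\rVert_{L^2}^2$, and the cubic part $\Rb\xi\lvert\xi\rvert^2$ is handled the same way after noting that $\lVert\xi(t)\rVert_{L^4}\leq\lVert\bu(t)\rVert_{L^4}+\lVert\bv(t)\rVert_{L^4}$ is uniformly bounded, so powers of $\lVert\xi\rVert_{L^4}$ above the fourth reduce to the fourth. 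The two-dimensional case is analogous, using $\lVert w\rVert_{L^4}^2\leq C\lVert w\rVert_{L^2}\lVert\nabla w\rVert_{L^2}$. Collecting everything,
\begin{equation*}
\frac{d}{dt}\lVert w\rVert_{L^2}^2 + \Gamtwo\lVert\Delta w\rVert_{L^2}^2 \leq A(t)\lVert w\rVert_{L^2}^2 + B(t)\,,
\end{equation*}
with $A(t)\leq C(1+\lVert\bu(t)\rVert_{H^2}^{4/3})\in L^1(0,T)$, since $\bu$ lies in an $H^2$ absorbing ball, and $B(t)\leq C\lVert\xi(t)\rVert_{L^4}^4$ up to uniformly bounded prefactors.

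It then remains to prove $\int_0^T B(t)\,dt\leq C(T)\lVert\bu_0-\bv_0\rVert_{L^2}^4$; granted this, Gr\"onwall together with $w(0)=0$ gives $\lVert w(T)\rVert_{L^2}^2\leq e^{\int_0^T A}\int_0^T B\leq K\lVert\bu_0-\bv_0\rVert_{L^2}^4$, the desired estimate. This is the point at which the argument must be sharper than the uniqueness estimate: from the difference inequality \eqref{differenceestimate} (with $u_2=\bv$, which stays in an absorbing ball so $\lVert\nabla\bv\rVert_{L^2}\leq C$), Gr\"onwall yields $\sup_{[0,T]}\lVert\xi(t)\rVert_{L^2}\leq C(T)\lVert\bu_0-\bv_0\rVert_{L^2}$, and integrating \eqref{differenceestimate} in time yields $\int_0^T\lVert\Delta\xi\rVert_{L^2}^2\,dt\leq C(T)\lVert\bu_0-\bv_0\rVert_{L^2}^2$. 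The Gagliardo--Nirenberg inequality $\lVert\xi\rVert_{L^4}^4\leq C\lVert\xi\rVert_{L^2}^{4-d/2}\lVert\Delta\xi\rVert_{L^2}^{d/2}$, combined with H\"older in time (legitimate as $d/2<2$), then gives
\begin{equation*}
\int_0^T\lVert\xi\rVert_{L^4}^4\,dt \leq C\Bigl(\sup_{[0,T]}\lVert\xi\rVert_{L^2}\Bigr)^{4-d/2}\Bigl(\int_0^T\lVert\Delta\xi\rVert_{L^2}^2\,dt\Bigr)^{d/4}T^{1-d/4} \leq C(T)\lVert\bu_0-\bv_0\rVert_{L^2}^{4}\,,
\end{equation*}
which is the required bound.

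The main obstacle is precisely this last point. The crude $L^2$-in-time bounds inherited directly from the uniqueness proof only give $\int_0^T\lVert\xi\rVert_{L^4}^4\lesssim\lVert\bu_0-\bv_0\rVert_{L^2}^2$, and hence merely $\lVert w(T)\rVert_{L^2}\lesssim\lVert\bu_0-\bv_0\rVert_{L^2}$, which does not suffice; one must exploit \emph{simultaneously} the uniform-in-time $L^2$-smallness of $\xi$ and its time-integrated $\dot H^2$-control to recover the extra two powers of $\lVert\bu_0-\bv_0\rVert_{L^2}$. A secondary technical issue is that in three dimensions the advective nonlinearity is supercritical for the Laplacian but subcritical for the bi-Laplacian, so every offending term must be absorbed into $\Gamtwo\lVert\Delta w\rVert_{L^2}^2$ rather than into a gradient term; the anti-diffusion term has the same feature and depends entirely on the hyperdiffusion for its control, exactly as in \eqref{antidiffusionyoungestimate}.
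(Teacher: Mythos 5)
Your proposal is correct and follows essentially the same route as the paper: both form the residual between the difference of nonlinear trajectories and the linearised solution (your $w$, the paper's $\bz$), observe that its equation is forced only by terms quadratic or higher in the difference $\xi=\bw$, and close an $L^2$ Gr\"onwall estimate using the uniqueness inequality \eqref{differenceestimate} to control $\sup_t\lVert\xi\rVert_{L^2}$ and $\int_0^T\lVert\Delta\xi\rVert_{L^2}^2$, yielding the quartic bound. The step you flag as the main obstacle --- combining the uniform-in-time $L^2$ smallness with the time-integrated $\dot H^2$ control via Gagliardo--Nirenberg and H\"older in time to get $\int_0^T\lVert\xi\rVert_{L^4}^4\,dt\lesssim\lVert\bu_0-\bv_0\rVert_{L^2}^4$ --- is exactly the mechanism the paper's final Gr\"onwall application relies on (there stated without this detail), so your write-up is, if anything, more explicit at that point.
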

\begin{proof}
In this proof we will restrict ourselves to the three-dimensional case (for simplicity); and we will not explicitly track the parameters of the TTSH equations, as we will not use the precise constants later on. In the proof of Theorem \ref{weaksolutionsthm}, we derived that the difference $\bw = S(t) \bu_0 - S(t) \bv_0$ satisfies (and where $q = p_{\small\bu} - p_{\small\bv}$)
\begin{align*}
\partial_t \bw &+ \bu \cdot \nabla \bw + \bw \cdot \nabla \bv + \nabla q + \alpha \bw + \Rb ( \bw \lvert \bu \rvert^2 + \bv ( \bw \cdot (\bu + \bv) ) \\
&+ \Gamzero \Delta \bw + \Gamtwo \Delta^2 \bw = 0\,; \quad \nabla \cdot \bw = 0, \quad \bw \lvert_{t = 0} = \bu_0 - \bv_0.
\end{align*}
First we observe that one can obtain the following identity
\begin{equation} \label{cubicidentity}
\bv ( \bw \cdot (\bu + \bv) ) = 2 \bu (\bw \cdot \bu) - 2 \bw (\bw \cdot \bu) - \bv \lvert \bw \rvert^2\,.
\end{equation}
In the proof of Theorem \ref{weaksolutionsthm}, we showed estimate \eqref{differenceestimate} for $\bw$, from which it follows that there exists a constant $K > 0$ such that
\begin{equation} \label{differencebound}
\lVert \bw (\cdot, t) \rVert_{L^2}^2 \leq K e^{K t} \lVert \bw_0 \rVert_{L^2}^2, \quad \int_0^t \lVert \Delta \bw (\cdot, t') \rVert_{L^2}^2 dt' \leq K e^{K t} \lVert \bw_0 \rVert_{L^2}^2.
\end{equation}
Now we define $\bu_L$ as the solution of the equation
\begin{align}
\big(\partial_{t} &+ \bu\cdot\nabla\big)\bu_L + \bu_L\cdot\nabla\bu + \nabla p_L
= -\left\{\alpha + \Gamzero \Delta + \Gamtwo \Delta^{2}\right\}\bu_L\nonumber\\
&- \Rb\,\lvert \bu \rvert^{2}\bu_L - 2\Rb\,\bu(\bu_L \cdot \bu ), \quad \bu_L (0) = \bu_0 - \bv_0. \nonumber
\end{align}
We also define the function
\begin{equation}
\bz = \bw - \bu_L.
\end{equation}
One can check that $\bz$ satisfies the following equation (by using identity \eqref{cubicidentity})
\begin{align*}
\partial_t \bz &+ \bu \cdot \nabla \bz + \bz \cdot \nabla \bu - \bw \cdot \nabla \bw + \nabla (q - p_L) + \alpha \bz + \Gamzero \Delta \bz + \Gamtwo \Delta^2 \bz \\
&+ \Rb \lvert \bu \rvert^2 \bz + 2 \Rb \bu (\bz \cdot \bu) - 2 \Rb \bw (\bw \cdot \bu) - \Rb \bv \lvert \bw \rvert^2 = 0.
\end{align*}
Now we take the $L^2 (\mathbb{T}^d)$ inner product with $\bz$, which leads to (by applying the Lions-Magenes lemma)
\begin{align*}
&\frac{1}{2} \frac{d}{d t} \lVert \bz \rVert_{L^2}^2 + \alpha \lVert \bz \rVert_{L^2}^2 - \Gamzero \lVert \nabla \bz \rVert_{L^2}^2 + \Gamtwo \lVert \Delta \bz \rVert_{L^2}^2 + \Rb \int_{\mathbb{T}^d} \bigg[ \lvert \bu \rvert^2 \lvert \bz \rvert^2 + 2 \lvert \bz \cdot \bu \rvert^2 \bigg] dx \\
&= \int_{\mathbb{T}^d} \bigg[- \bz \cdot \nabla \bu + \bw \cdot \nabla \bw + 2 \Rb \bw (\bw \cdot \bu) + \Rb \bv \lvert \bw \rvert^2  \bigg] \cdot \bz dx \\
&\leq \lVert \nabla \bu \rVert_{L^2} \lVert \bz \rVert_{L^4}^2 + \lVert \bw \rVert_{L^2} \lVert \nabla \bw \rVert_{L^2} \lVert \bz \rVert_{L^\infty} + 2 \Rb \lVert \bw \rVert_{L^4}^2 (\lVert \bu \rVert_{L^2} + \lVert \bv \rVert_{L^2}) \lVert \bz \rVert_{L^\infty} \,.
\end{align*}
Using Young's inequality repeatedly, together with estimates \eqref{absorbingballL2}, \eqref{H1absorbingball} and \eqref{differencebound}, we find (again for some constant $K > 0$)
\begin{align*}
&\frac{1}{2} \frac{d}{d t} \lVert \bz \rVert_{L^2}^2 + \left(\alpha - \frac{\Gamzero^2}{\Gamtwo} - K \right) \lVert \bz \rVert_{L^2}^2 + \frac{\Gamtwo}{4} \lVert \Delta \bz \rVert_{L^2}^2 \leq K \lVert \bw \rVert_{L^2}^2 \lVert \nabla \bw \rVert_{L^2}^2 + K \lVert \bw \rVert_{L^4}^4.
\end{align*}
Finally, using the Gronwall inequality and $\bz (0) = 0$, we obtain 
\begin{equation}
\lVert \bz (\cdot, t) \rVert_{L^2}^2 \leq K e^{K t} \lVert \bw_0 \rVert_{L^2}^4\,,
\end{equation}
which implies the differentiability of the semigroup.
\end{proof}

\subsection{Attractor dimension estimate in the two-dimensional case}\label{2datt}

A sharp estimate (up to a logarithm) for the dimension of the attractor for the $2d$ NSEs was found by Constantin, Foias and Temam \cite{CFT1988}, aided by Constantin's logarithmic $L^{\infty}$-inequality for functions whose gradients are orthonormal \cite{Constantin1987}. The same dimension estimate in \cite{CFT1988} was later recovered by Doering and Gibbon \cite{DG1991}, who used the vorticity formulation and not the velocity formulation of the Navier-Stokes equations. In the following theorem we use the velocity formulation of the $2d$ TTSH equations as the basis of our calculation, but as was said before by Theorem \ref{globalattractorthm} it makes no difference compared to the vorticity formulation.
\begin{theorem}\label{2dattthm}
The Lyapunov dimension of the global attractor $\mathcal{A}$ of the 2$d$ TTSH equations on the unit periodic domain $\mathbb{T}^2$ is bounded by (for some constant $c$)
\begin{equation*}
d_L (\Att) \leq c\max \left\{\left\lvert\frac{\Gamzero^2}{\Gamtwo^2} - \alpha \Gamtwo^{-1}\right\rvert^{1/2}\,;
~~\left(\frac{|\alpha_{R}|^2}{\Rb}\right)^{3/16}\Gamtwo^{-5/8}\Rb^{-1/16}
\right\}\,,
\end{equation*}
where we recall that $\alpha_{R} = \alpha - \frac{1}{2} \frac{\Gamzero^2}{\Gamtwo}$.
\end{theorem}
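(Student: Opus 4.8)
The plan is to estimate the sum of the first $N$ global Lyapunov exponents by bounding $\brak{Tr[\mathbf{P}_N\mathcal{M}\mathbf{P}_N]}_T$ from above, using the linearised equation \eqref{linearisedequation} together with the Lieb-Thirring inequalities \eqref{LT1A}--\eqref{LT1B} and the spectral bounds \eqref{LT5}--\eqref{LT6}. First I would compute $Tr[\mathbf{P}_N\mathcal{M}\mathbf{P}_N] = \sum_n \int_{\mathbb{T}^2}\bphi_n \mathcal{M}\bphi_n\,dV$. The pressure term drops because the $\bphi_n$ are divergence-free, and the transport term $\bu\cdot\nabla\bphi_n$ contributes nothing to the trace (it is antisymmetric). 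The $\delta\bu\cdot\nabla\bu$ term and the cubic terms $-\Rb|\bu|^2\delta\bu - 2\Rb\bu(\delta\bu\cdot\bu)$ contribute at most $\Rb\int_{\mathbb{T}^2}|\bu|^2\rho\,dV$ (plus a similar term with $\nabla\bu$) where $\rho \coloneqq \sum_n|\bphi_n|^2$; the cubic terms with the correct sign are actually favourable and can be discarded. The linear part gives exactly $-\alpha N + \Gamzero\,Tr[\mathbf{P}_N(-\Delta)\mathbf{P}_N] - \Gamtwo\,Tr[\mathbf{P}_N(\Delta^2)\mathbf{P}_N]$ (note $\int\bphi_n(-\Gamzero\Delta)\bphi_n = -\Gamzero|\nabla\bphi_n|^2$ because of the sign of the anti-diffusion). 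So writing $T_1 \coloneqq Tr[\mathbf{P}_N(-\Delta)\mathbf{P}_N]$ and using \eqref{LT3}, $T_2 \coloneqq Tr[\mathbf{P}_N(\Delta^2)\mathbf{P}_N] \geq N^{-1}T_1^2$, I arrive at a bound of the shape
\begin{equation*}
Tr[\mathbf{P}_N\mathcal{M}\mathbf{P}_N] \leq -\alpha N + \Gamzero T_1 - \Gamtwo N^{-1}T_1^2 + \Rb\int_{\mathbb{T}^2}|\bu|^2\rho\,dV + (\text{lower order}).
\end{equation*}

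Next I would handle the nonlinear term. By Cauchy--Schwarz and then the first Lieb-Thirring inequality \eqref{LT1A} in $d=2$, $\int_{\mathbb{T}^2}|\bu|^2\rho\,dV \leq \norm{\bu}_{L^4}^2\norm{\rho}_{L^2} \lesssim \norm{\bu}_{L^4}^2(T_1 + 1)^{1/2}\cdot$(something); more carefully in $2d$ one uses $\norm{\rho}_{L^2}^2 = \int\rho^2 \lesssim T_1 + 1$, giving $\int|\bu|^2\rho \lesssim \norm{\bu}_{L^4}^2(T_1+1)^{1/2}$. Then Young's inequality splits this against the coercive $-\Gamtwo N^{-1}T_1^2$ (and against $-\alpha N$ if $\alpha<0$), absorbing the $T_1$-dependence and leaving a remainder controlled by a power of $\norm{\bu}_{L^4}^2$. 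Taking the time average and invoking the absorbing-ball bound \eqref{L2absorbingball1}, which controls $\brak{\norm{\bu}_{L^4}^4}_T \leq k_2 \lesssim |\alpha_R|^2/\Rb$ uniformly on the attractor, converts this into a number. Also $\Gamzero T_1$ is dominated by $\shalf\Gamtwo N^{-1}T_1^2 + \shalf\Gamzero^2\Gamtwo^{-1}N$ via Young. Collecting terms, $\brak{Tr[\mathbf{P}_N\mathcal{M}\mathbf{P}_N]}_T \leq (-\alpha + \shalf\Gamzero^2\Gamtwo^{-1})N + c\,\Gamtwo^{-1}(\text{power of }\Rb^{-1}|\alpha_R|^2)N^{?} - \shalf\Gamtwo N^{-1}\brak{T_1^2}_T$, and by Jensen $\brak{T_1^2}_T \geq \brak{T_1}_T^2 \geq c\,N^{(d+2)/d\cdot 2} = c\,N^4$ for $d=2$ using \eqref{LT5}. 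So the negative term grows like $N^3$ while the positive terms grow linearly (or like a sub-cubic power) in $N$; setting the bound negative yields $N_0$ of the stated order, hence $d_L(\Att) \leq N_0$.

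The main obstacle I anticipate is the bookkeeping in the Young's-inequality splittings so that the final exponents come out exactly as claimed --- in particular the second branch $(|\alpha_R|^2/\Rb)^{3/16}\Gamtwo^{-5/8}\Rb^{-1/16}$, whose fractional powers suggest the nonlinear term is balanced against $-\Gamtwo N^{-1}T_1^2$ with a specific Hölder exponent coming from interpolating $\norm{\rho}_{L^2}$ against $T_1$ rather than a cruder estimate. I would need to be careful to extract the right power of $N$ before applying \eqref{LT5}: if the nonlinear contribution after Young's inequality is of the form $c\,\Gamtwo^{-a}\norm{\bu}_{L^4}^{b}N^{\gamma}$ with $\gamma<3$, then comparing $c\,\Gamtwo^{-a}k_2^{b/4}N^{\gamma}$ against $\shalf\Gamtwo c\,N^{3}$ and solving for $N$ produces the second term; the precise value of $\gamma$ (likely $\gamma = 3/2$ or $2$) and the matching of $a,b$ to $\shalf,\ldots$ is the delicate point. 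The first branch $|\Gamzero^2/\Gamtwo^2 - \alpha/\Gamtwo|^{1/2}$ comes simply from balancing the purely linear positive term $(-\alpha + \shalf\Gamzero^2\Gamtwo^{-1})N$ against $\shalf\Gamtwo c\,N^3$, giving $N_0^2 \sim \Gamtwo^{-1}|\alpha - \shalf\Gamzero^2\Gamtwo^{-1}|$, which (up to constants and the relevant parameter regime) is the claimed $|\Gamzero^2/\Gamtwo^2 - \alpha\Gamtwo^{-1}|^{1/2}$; the $2d$ cancellation \eqref{2dcancellation} is what makes the inertial term $\delta\bu\cdot\nabla\bu$ harmless here, exactly as in \cite{CFT1988,DG1991}.
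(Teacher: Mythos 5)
Your proposal follows essentially the same route as the paper: compute the trace, drop the pressure and transport terms, discard the (negative) cubic terms, use Cauchy--Schwarz plus the Lieb--Thirring inequality \eqref{LT1A} on $\rho=\sum_n|\bphi_n|^2$, absorb everything into the coercive $-\Gamtwo\,Tr[\mathbf{P}_N\Delta^2\mathbf{P}_N]\leq -c\,\Gamtwo N^3$ term via Young, time-average against the absorbing-ball bound \eqref{L2absorbingball1}, and solve for $N_0$. Two points of execution deserve correction, since they are what produce the stated exponents. First, the term that actually survives and must be estimated is the one from $\delta\bu\cdot\nabla\bu$, namely $\int_{\mathbb{T}^2}|\nabla\bu|\,\rho\,dV\leq\norm{\nabla\bu}_{L^2}\norm{\rho}_{L^2}$; the term $\int|\bu|^2\rho\,dV$ on which you focus your detailed estimate belongs to the cubic part, which is favourable and already discarded. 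The paper then controls $\brak{\norm{\nabla\bu}_{L^2}^{4/3}}_T$ by interpolating $\norm{\nabla\bu}_{L^2}\leq\norm{\bu}_{L^4}^{1/2}\norm{\Delta\bu}_{L^2}^{1/2}$ and using $\Rb\brak{\norm{\bu}_{L^4}^4}_T+\Gamtwo\brak{\norm{\Delta\bu}_{L^2}^2}_T\lesssim|\alpha_R|^2/\Rb$, which is exactly where the exponents $(|\alpha_R|^2/\Rb)^{3/16}\Gamtwo^{-5/8}\Rb^{-1/16}$ come from; your version based on $\norm{\bu}_{L^4}^2\norm{\rho}_{L^2}$ would land on different (and not obviously sufficient) powers. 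Second, the cancellation \eqref{2dcancellation} plays no role in the trace estimate: the term that vanishes there is $\int\bphi_n\cdot(\bu\cdot\nabla)\bphi_n=0$ (valid in any dimension), while $\int\bphi_n\cdot(\bphi_n\cdot\nabla\bu)$ does not vanish and is precisely the term estimated above.
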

\par\vspace{0mm}\noindent
\begin{remark} \label{2Dremark}
By keeping $\alpha$ and $\Rb$ fixed (more specifically if $\Rb^{-1} \ll \Gamzero$), and taking $\Gamtwo \ll 1$ and $\Gamzero \gg 1$ (i.e., the parameter scaling in active turbulence), to leading order we therefore deduce that the dominant term in the attractor dimension estimate is
\begin{equation}
d_L (\Att) \sim \left(\frac{\Gamzero^2}{\Gamtwo^2} - \alpha \Gamtwo^{-1}\right)^{1/2}\,,
\end{equation}
from which we recover the inverse Swift-Hohenberg scale $\eta_{res}^{-1} = (\Gamzero/\Gamtwo)^{1/2}$ by means of relation \eqref{Ndef}. Note that the estimates in the proof of Theorem \ref{2dattthm} also lead to bounds on the Hausdorff and fractal dimensions of the global attractor $\Att$. In addition, the result can also be generalised to include the case of the TTSH equations with a forcing. The numerical methods used in \S\ref{Kiran} require an adjustment of the domain length $L$ which for the above has been taken as unity. A re-working of this calculation on the domain $[0,\,L]^2$ turns this estimate into $\eta_{res}^{-1}\sim L(\Gamzero/\Gamtwo)^{1/2}$.
\end{remark}
\par\smallskip\noindent
\begin{proof}
The linearisation of the 2$D$ TTSH equations around a given solution $\bu$ is given by
\begin{align}
\left(\partial_{t} + \bu\cdot\nabla\right)\delta\bu + \delta\bu\cdot\nabla\bu + \nabla \delta p 
&= -\left\{\alpha + \Gamzero \Delta + \Gamtwo \Delta^{2}\right\}\delta\bu\nonumber\\
&- \Rb\,\lvert \bu \rvert^{2}\delta\bu - 2\Rb\,\bu(\delta \bu \cdot \bu ).
\end{align}
Now let $\{ \bphi_1, \ldots, \bphi_N \}$ be a subset of an orthonormal basis of the space $\{ \bu \in L^2 (\mathbb{T}^d) \, \lvert \, \nabla \cdot \bu = 0 \}$. Then we compute the volume elements as follows:
\begin{align}\label{betaterm}
Tr \left[\mathbf{P}_{N}\mathcal{M}\mathbf{P}_{N}\right]  &= \sum_{n=1}^{N}
\int_{\mathbb{T}^2} \bphi_{n}\cdot\left\{-\alpha\bphi_{n} - \bu\cdot\nabla\bphi_{n}-\bphi_{n}\cdot\nabla\bu 
- \nabla\tilde{p}\left(\bphi_{n}\right)\right\}\,dV\nonumber\\
&-\Rb\sum_{n=1}^{N}\int_{\mathbb{T}^2} \bphi_{n}\cdot\left\{\lvert \bu \rvert^{2}\bphi_{n} + 2 \bu (\bphi_{n} \cdot \bu) \right\}\,dV\,\nonumber\\
&-\sum_{n=1}^{N}\int_{\mathbb{T}^2} \bphi_{n}\cdot\left\{\Gamzero \Delta\bphi_{n} + \Gamtwo \Delta^{2}\bphi_{n}\right\}\,dV\,.
\end{align}
We can bound the trace of the volume element as follows:
\begin{align}
Tr \left[ \mathbf{P}_{N}\mathcal{M}\mathbf{P}_{N}\right] &\leq - \Gamtwo \sum_{n=1}^{N} \int_{\mathbb{T}^2} |\Delta\bphi_{n}|^{2}\!
dV + \Gamzero \sum_{n=1}^{N} \int_{\mathbb{T}^2} |\nabla\bphi_{n}|^{2}\,dV \nonumber \\
&+ \sum_{n=1}^{N} \int_{\mathbb{T}^2}|\bphi_{n}|^{2} \left(-\alpha + |\nabla \bu |\right)\!dV\,. \label{2dtraceestimate}
\end{align}
Next we estimate the terms coming from the advection term:
\begin{align*}
&\sum_{n=1}^{N} \int_{\mathbb{T}^2}|\bphi_{n}|^{2} \lvert \nabla \bu \rvert dV \leq \lVert \nabla \bu \rVert_{L^2} \left\lVert \sum_{n=1}^{N} |\bphi_{n}|^{2} \right\rVert_{L^2} \\
&\lesssim \lVert \nabla \bu \rVert_{L^2} \bigg[ \left( \sum_{n=1}^{N} \int_{\mathbb{T}^2} |\nabla\bphi_{n}|^{2}\,dV \right)^{1/2} + c \bigg] \\
&\lesssim \lVert \nabla \bu \rVert_{L^2} \left(\sum_{n=1}^{N} \int_{\mathbb{T}^2} |\bphi_{n}|^{2}\,dV \right)^{1/4} \left( \sum_{n=1}^{N} \int_{\mathbb{T}^2} |\Delta\bphi_{n}|^{2}\,dV \right)^{1/4} + \lVert \nabla \bu \rVert_{L^2} \\
&\lesssim \frac{1}{4} \Gamtwo \sum_{n=1}^{N} \int_{\mathbb{T}^2} |\Delta\bphi_{n}|^{2}\,dV + \frac{3}{4}\Gamtwo^{-1/3} \lVert \nabla \bu \rVert_{L^2}^{4/3} \left(\sum_{n=1}^{N} \int_{\mathbb{T}^2} |\bphi_{n}|^{2}\,dV \right)^{1/3} + \lVert \nabla \bu \rVert_{L^2}\,,
\end{align*}
where we have used the Lieb-Thirring inequality \eqref{LT1A}, as well as the Cauchy-Schwartz inequality for sums. Using Young's inequality we now have
\begin{align*}
\Gamzero \sum_{n=1}^{N} \int_{\mathbb{T}^2} |\nabla\bphi_{n}|^{2}\,dV &\leq \frac{1}{4} \Gamtwo \sum_{n=1}^{N} \int_{\mathbb{T}^2} |\Delta\bphi_{n}|^{2}\,dV + \frac{\Gamzero^2}{\Gamtwo} \sum_{n=1}^{N} \int_{\mathbb{T}^2} |\bphi_{n}|^{2}\,dV \\
&= \frac{1}{4} \Gamtwo \sum_{n=1}^{N} \int_{\mathbb{T}^2} |\Delta\bphi_{n}|^{2}\,dV + \frac{\Gamzero^2}{\Gamtwo} N,
\end{align*}
where we have used property \eqref{trrelns2}. Inserting this bound into estimate \eqref{2dtraceestimate} for the volume elements gives
\begin{align*}
Tr \left[ \mathbf{P}_{N}\mathcal{M}\mathbf{P}_{N}\right] &\leq - \frac{1}{4} \Gamtwo \sum_{n=1}^{N} \int_{\mathbb{T}^2}|\Delta\bphi_{n}|^{2}dV + \Gamtwo^{-1/3} \lVert \nabla \bu \rVert_{L^2}^{4/3} N^{1/3}\\ 
&+ \left( \frac{\Gamzero^2}{\Gamtwo} - \alpha \right) N + \lVert \nabla \bu \rVert_{L^2} \\
&\leq - \frac{1}{4}\Gamtwo N^3 + \Gamtwo^{-1/3}\lVert \nabla \bu \rVert_{L^2}^{4/3} N^{1/3} + \left( \frac{\Gamzero^2}{\Gamtwo} - \alpha \right) N + \lVert \nabla \bu \rVert_{L^2} \,.
\end{align*}
Then, by taking time averages of this inequality and using inequality \eqref{LT6}, we obtain
\begin{align*}
\left<Tr \left[ \mathbf{P}_{N}\mathcal{M}\mathbf{P}_{N}\right] \right>_{T} &\leq - \frac{1}{2}\Gamtwo N^3 + \Gamtwo^{-1/3} \left\langle \lVert \nabla \bu \rVert_{L^2}^{4/3} \right\rangle_T N^{1/3} + \left( \frac{\Gamzero^2}{\Gamtwo} - \alpha \right) N \\
&+ \left\langle \lVert \nabla \bu \rVert_{L^2} \right\rangle_{T}.
\end{align*}
By using an interpolation inequality we have
\begin{align}\label{GNI1}
\left\langle \lVert \nabla \bu \rVert_{L^2}^{4/3} \right\rangle_T &\leq \left\langle \lVert \bu \rVert_{L^4}^{2/3} \lVert \Delta \bu \rVert_{L^2}^{2/3} \right\rangle_T \leq \left\langle \lVert \bu \rVert_{L^4}^4 \right\rangle_T^{1/6} \left\langle \lVert \Delta \bu \rVert_{L^2}^2 \right\rangle_T^{1/3}, \\
\left\langle \lVert \nabla \bu \rVert_{L^2} \right\rangle_{T} &\leq \left\langle \lVert \bu \rVert_{L^4}^{1/2} \lVert \Delta \bu \rVert_{L^2}^{1/2} \right\rangle_T \leq \left\langle \lVert \bu \rVert_{L^4}^4 \right\rangle_T^{1/8} \left\langle \lVert \Delta \bu \rVert_{L^2}^2 \right\rangle_T^{1/4}\,.
\end{align}
From estimate \eqref{L2absorbingball1} in the proof of Theorem \ref{weaksolutionsthm} we recall that
\begin{equation*}
\Rb \left\langle \lVert \bu \rVert_{L^4}^4 \right\rangle_T + \Gamtwo \left\langle \lVert \Delta \bu \rVert_{L^2}^2 \right\rangle_T \leq 8 \frac{|\alpha_{R}|^2}{\Rb}\,.
\end{equation*}
This then yields 
\begin{align*}
\left\langle \lVert \nabla \bu \rVert_{L^2}^{4/3} \right\rangle_T &\leq 8 \left(\frac{|\alpha_{R}|^2}{\Rb}\right)^{1/2}\Gamtwo^{-1/3}\Rb^{-1/6}\,; \\
\left\langle \lVert \nabla \bu \rVert_{L^2} \right\rangle_T &\leq 8 \left(\frac{|\alpha_{R}|^2}{\Rb}\right)^{3/8} \Gamtwo^{-1/4} \Rb^{-1/8}
\end{align*}
and so 
\begin{align*}
\left<Tr \left[ \mathbf{P}_{N}\mathcal{M}\mathbf{P}_{N}\right] \right>_{T} &\leq - \frac{1}{2}\Gamtwo N^3 + 
N^{1/3}\Gamtwo^{-1/3}\left(\frac{|\alpha_{R}|^2}{\Rb}\right)^{1/2}\Gamtwo^{-1/3}\Rb^{-1/6}\\
&+ \left( \frac{\Gamzero^2}{\Gamtwo} - \alpha \right) N + \left(\frac{|\alpha_{R}|^2}{\Rb}\right)^{3/8} \Gamtwo^{-1/4} \Rb^{-1/8} \,.
\end{align*}
By applying Proposition 2.1 and Theorem 3.3 in \cite[Chapter 5]{temam1997}, we obtain the following bounds on the dimension:
\begin{align}
N &\leq c \left(\frac{\Gamzero^2}{\Gamtwo^2} - \alpha \Gamtwo^{-1}\right)^{1/2}\,; \\
N &\leq c \left(\frac{|\alpha_{R}|^2}{\Rb}\right)^{3/16}\Gamtwo^{-5/8}\Rb^{-1/16}\,; \\
N &\leq c \left(\frac{|\alpha_{R}|^2}{\Rb}\right)^{1/8} \Gamtwo^{-5/12} \Rb^{-1/24}\,.
\end{align}
Notice that the second inequality implies the validity of the third inequality (in the standard parameter range).
\end{proof}

\subsection{Attractor dimension estimate in the three-dimensional case}\label{3datt}

\begin{theorem}\label{3dattthm}
An estimate of the Lyapunov dimension of the attractor $\mathcal{A}$ of the 3$d$ TTSH equations on the unit domain $\mathbb{T}^3$ is given by
\begin{equation*} \label{3dest}
d_L (\Att) \leq c \max \left\{\left\lvert\frac{\Gamzero^2}{\Gamtwo^{2}} - \alpha \Gamtwo^{-1} \right\rvert^{3/4}\,;~~
2^{9/4}\Gamtwo^{-1}\left(\frac{|\alpha_{R}|}{\Rb}\right)^{1/2}\right\}\,,
\end{equation*}
where we recall that $\alpha_{R} = \alpha - \frac{1}{2} \frac{\Gamzero^2}{\Gamtwo}$.
\end{theorem}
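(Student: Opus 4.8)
The plan is to follow the same trace-formula strategy as in the two-dimensional case (Theorem \ref{2dattthm}), but now working in $\mathbb{T}^3$ and using the three-dimensional forms of the Lieb-Thirring inequality \eqref{LT1B} and the eigenvalue bound \eqref{LT6}, namely $Tr[\mathbf{P}_{N}(\Delta^2)\mathbf{P}_{N}] \geq c N^{1+4/d}$ with $d=3$, i.e. $c N^{7/3}$. First I would write down $Tr[\mathbf{P}_{N}\mathcal{M}\mathbf{P}_{N}]$ from the linearised equation \eqref{linearisedequation}: the pressure term drops by incompressibility, the term $-2\Rb\,\bu(\bphi_n\cdot\bu) \cdot \bphi_n = -2\Rb |\bphi_n\cdot\bu|^2 \leq 0$ and $-\Rb|\bu|^2|\bphi_n|^2 \leq 0$ are both favourable and can be discarded, and the advection term $\bu\cdot\nabla\bphi_n$ contributes nothing after summing against $\bphi_n$. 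This leaves the same structural bound as \eqref{2dtraceestimate}:
\begin{equation*}
Tr[\mathbf{P}_{N}\mathcal{M}\mathbf{P}_{N}] \leq -\Gamtwo \sum_{n=1}^{N}\int_{\mathbb{T}^3}|\Delta\bphi_n|^2\,dV + \Gamzero \sum_{n=1}^{N}\int_{\mathbb{T}^3}|\nabla\bphi_n|^2\,dV + \sum_{n=1}^{N}\int_{\mathbb{T}^3}|\bphi_n|^2(-\alpha + |\nabla\bu|)\,dV.
\end{equation*}

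Next I would absorb the anti-diffusion term $\Gamzero\sum\int|\nabla\bphi_n|^2$ exactly as before: interpolating $\|\nabla\bphi_n\|_{L^2}^2 \leq \|\bphi_n\|_{L^2}\|\Delta\bphi_n\|_{L^2}$ at the level of traces and applying Young's inequality gives $\Gamzero\,Tr[\mathbf{P}_N(-\Delta)\mathbf{P}_N] \leq \frac14\Gamtwo\,Tr[\mathbf{P}_N(\Delta^2)\mathbf{P}_N] + (\Gamzero^2/\Gamtwo)N$, using \eqref{trrelns2}. The advection/stretching term is the one place where dimension matters: in $3d$ I would estimate $\sum_n\int|\bphi_n|^2|\nabla\bu|\,dV \leq \|\nabla\bu\|_{L^2}\|\sum_n|\bphi_n|^2\|_{L^2}$, but now I must control $\|\sum_n|\bphi_n|^2\|_{L^2(\mathbb{T}^3)}$ via the $3d$ Lieb-Thirring inequality \eqref{LT1B} rather than \eqref{LT1A}. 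Since $\frac{d+4}{d} = \frac73$ in three dimensions, \eqref{LT1B} controls $\int(\sum|\bphi_n|^2)^{7/3}$ by $\sum\int|\Delta\bphi_n|^2 + c$; interpolating $L^2$ between $L^1$ (which gives $N$ by \eqref{trrelns2}) and $L^{7/3}$ yields a bound of the form $\|\sum|\bphi_n|^2\|_{L^2}^2 \lesssim N^{a}(Tr[\mathbf{P}_N(\Delta^2)\mathbf{P}_N])^{b} + \cdots$ with appropriate exponents, after which Young's inequality again hands a fraction of $\Gamtwo\,Tr[\mathbf{P}_N(\Delta^2)\mathbf{P}_N]$ back to the negative term and leaves a term polynomial in $N$ times a power of $\|\nabla\bu\|_{L^2}$. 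Combining everything with \eqref{LT6} in the form $-\Gamtwo\,Tr[\mathbf{P}_N(\Delta^2)\mathbf{P}_N] \leq -\tfrac12 c\,\Gamtwo N^{7/3}$ (after the quarter has been used up twice) gives
\begin{equation*}
\langle Tr[\mathbf{P}_N\mathcal{M}\mathbf{P}_N]\rangle_T \leq -c_1\Gamtwo N^{7/3} + c_2\Gamtwo^{-p}\langle\|\nabla\bu\|_{L^2}^{q}\rangle_T N^{r} + \Big(\frac{\Gamzero^2}{\Gamtwo} - \alpha\Big)N + (\text{lower order}),
\end{equation*}
for explicit exponents $p,q,r$.

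Then I would bound the time averages of $\|\nabla\bu\|_{L^2}$-powers using the absorbing-ball estimate \eqref{L2absorbingball1}, which gives $\Rb\langle\|\bu\|_{L^4}^4\rangle_T + \Gamtwo\langle\|\Delta\bu\|_{L^2}^2\rangle_T \leq 8|\alpha_R|^2/\Rb$, together with the interpolation $\|\nabla\bu\|_{L^2}^2 \leq \|\bu\|_{L^2}\|\Delta\bu\|_{L^2} \leq \|\bu\|_{L^4}\|\Delta\bu\|_{L^2}$ (Hölder on the unit domain) and Hölder in time, exactly in the spirit of \eqref{GNI1}. This replaces $\langle\|\nabla\bu\|_{L^2}^{q}\rangle_T$ by an explicit expression in $|\alpha_R|$, $\Rb$ and $\Gamtwo$. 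Finally, I would invoke Proposition 2.1 and Theorem 3.3 of \cite[Chapter 5]{temam1997}: the value $N_0$ that makes $\langle Tr[\mathbf{P}_N\mathcal{M}\mathbf{P}_N]\rangle_T < 0$ is bounded by the maximum of the solutions $N$ of $c_1\Gamtwo N^{7/3} = (\Gamzero^2/\Gamtwo - \alpha)N$ and of $c_1\Gamtwo N^{7/3} = c_2\Gamtwo^{-p}(\text{power of }|\alpha_R|/\Rb)N^{r}$, which produce respectively $N \lesssim (\Gamzero^2/\Gamtwo^2 - \alpha\Gamtwo^{-1})^{3/4}$ and $N \lesssim 2^{9/4}\Gamtwo^{-1}(|\alpha_R|/\Rb)^{1/2}$, the two terms appearing in the statement (plus a genuinely subdominant third term as in the $2d$ case). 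The main obstacle I anticipate is the bookkeeping of exponents in the $3d$ Lieb-Thirring step: one must choose the interpolation power between $L^1$ and $L^{7/3}$ and the subsequent Young's-inequality split so that the power of $N$ multiplying $\langle\|\nabla\bu\|_{L^2}^{q}\rangle_T$ is strictly less than $7/3$ (so the dissipative term dominates) while $q$ is small enough that the absorbing-ball bound \eqref{L2absorbingball1} actually controls $\langle\|\nabla\bu\|_{L^2}^{q}\rangle_T$; getting the constant $2^{9/4}$ and the exponent $1/2$ on $|\alpha_R|/\Rb$ to come out exactly requires care but no new ideas beyond those already used in Theorem \ref{2dattthm}.
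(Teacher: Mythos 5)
Your overall strategy is the right one and matches the paper's for the linear terms, the use of \eqref{trrelns2}, \eqref{LT6} with $d=3$, the absorption of the anti-diffusion term by Young's inequality, and the final balancing via Proposition 2.1 and Theorem 3.3 of \cite[Chapter 5]{temam1997}; the dominant term $\left(\Gamzero^2/\Gamtwo^2-\alpha\Gamtwo^{-1}\right)^{3/4}$ comes out identically in both arguments. The genuine divergence is in the stretching term $\sum_n\int\bphi_n\cdot(\bphi_n\cdot\nabla\bu)\,dV$. The paper does \emph{not} keep the derivative on $\bu$: it first integrates by parts so that the term is controlled by $\int|\bu|\bigl(\sum_n|\bphi_n|^2\bigr)^{1/2}\bigl(\sum_n|\nabla\bphi_n|^2\bigr)^{1/2}dV$, then applies H\"older with $\lVert\bu\rVert_{L^4}$, the trace interpolation \eqref{LT2} for the gradient factor, the interpolation $\lVert X\rVert_{L^2}^2\leq\lVert X\rVert_{L^1}^{1/4}\lVert X\rVert_{L^{7/3}}^{7/4}$ together with \eqref{LT1B} for the zeroth-order factor, and finally Young's inequality. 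This is what makes the time average land directly on $\langle\lVert\bu\rVert_{L^4}^{16/9}\rangle_T\leq\langle\lVert\bu\rVert_{L^4}^4\rangle_T^{4/9}$, which \eqref{L2absorbingball1} controls, and what produces the stated subdominant term $2^{9/4}\Gamtwo^{-1}(|\alpha_R|/\Rb)^{1/2}$. Your version, which bounds the term by $\lVert\nabla\bu\rVert_{L^2}\bigl\lVert\sum_n|\bphi_n|^2\bigr\rVert_{L^2}$ and interpolates $L^2$ between $L^1$ and $L^{7/3}$, does close (you get roughly $\frac{\Gamtwo}{4}Tr[\mathbf{P}_N(\Delta^2)\mathbf{P}_N]+c\,\Gamtwo^{-3/5}\lVert\nabla\bu\rVert_{L^2}^{8/5}N^{1/5}$, and $\langle\lVert\nabla\bu\rVert_{L^2}^{8/5}\rangle_T$ is controlled by interpolating against $\lVert\bu\rVert_{L^2}$ and $\lVert\Delta\bu\rVert_{L^2}$), but the resulting balance gives $N\lesssim\Gamtwo^{-15/16}|\alpha_R|^{9/16}\Rb^{-3/8}$ rather than $\Gamtwo^{-1}(|\alpha_R|/\Rb)^{1/2}$. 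So you would prove a valid attractor-dimension bound of the same qualitative form, with the same leading Swift--Hohenberg behaviour, but not the literal second entry of the max in Theorem \ref{3dattthm}; to recover the stated exponents and the constant $2^{9/4}$ you need the paper's integration by parts before estimating.
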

\par\vspace{-5mm}\noindent
\begin{remark} \label{3Dremark}
As observed in Theorem \ref{2dattthm} and also Remark \ref{2Dremark}, by keeping $\alpha$ and $\Rb$ fixed, and taking $\Gamtwo \ll 1$ and $\Gamzero \gg 1$ (i.e., the common parameter scaling in active turbulence), to leading order we therefore find the dominant term in the attractor dimension estimate to be (if $\Rb^{-1} \ll \Gamzero$)
\begin{equation}
d_L (\Att) \sim \left(\frac{\Gamzero^2}{\Gamtwo^{2}} -\alpha\Gamtwo^{-1}\right)^{3/4}\,,
\end{equation}
from which we recover the inverse Swift-Hohenberg scale $\eta_{res}^{-1} = (\Gamzero/\Gamtwo)^{1/2}$ as in the heuristic relation \eqref{Ndef}. A re-working of this calculation on the domain $[0,\,L]^3$ turns this estimate into $\eta_{res}^{-1}\sim L(\Gamzero/\Gamtwo)^{1/2}$.
\end{remark}
\par\smallskip\noindent
\begin{proof}[Proof of Theorem \ref{3dattthm}] The linearised three-dimensional TTSH equations, around a solution $\bu$, are given by
\beq{NS2}
\left(\partial_{t} + \bu\cdot\nabla\right)\delta\bu + \delta\bu\cdot\nabla\bu + \nabla \delta p 
&= -\left\{\alpha + \Gamzero \Delta + \Gamtwo \Delta^{2}\right\}\delta\bu\nonumber\\
&- \Rb\,\lvert \bu \rvert^{2}\delta\bu - 2\Rb\,\bu(\delta \bu \cdot \bu ).
\eeq
For any orthonormal subset of the basis for the space $W$ $\left\{\bphi_{1}, \bphi_{2},\ldots,\bphi_{N}\right\}$, the trace formula we need to estimate explicitly comes from \eqref{NS2} and is 
\begin{align*}
Tr \left[\mathbf{P}_{N}\mathcal{M}\mathbf{P}_{N}\right]  &= \sum_{n=1}^{N}
\int_{\mathbb{T}^3} \bphi_{n}\cdot\left\{-\alpha\bphi_{n} - \bu\cdot\nabla\bphi_{n}-\bphi_{n}\cdot\nabla\bu 
- \nabla\tilde{p}\left(\bphi_{n}\right)\right\}\,dV\\
&-\Rb\sum_{n=1}^{N}\int_{\mathbb{T}^3} \bphi_{n}\cdot\left\{\lvert \bu \rvert^{2}\bphi_{n} + 2 \bu (\bphi_{n} \cdot \bu) \right\}\,dV\,\\
&-\sum_{n=1}^{N}\int_{\mathbb{T}^3} \bphi_{n}\cdot\left\{\Gamzero \Delta\bphi_{n} + \Gamtwo \Delta^{2}\bphi_{n}\right\}\,dV\,.
\end{align*}
Since the elements $\bphi_{n}$ are divergence-free, the contribution from the pressure and the $\bphi_{n}\cdot\bu\cdot\nabla\bphi_{n}$-term vanishes. In the remaining term, coming from the advection, we integrate by parts, and we apply the Cauchy-Schwarz inequality (for sums) to obtain
\begin{align*}
&\left\lvert \sum_{n=1}^{N}
\int_{\mathbb{T}^3} \bphi_{n}\cdot\left\{\bphi_{n}\cdot\nabla\bu 
\right\}\,dV \right\rvert \leq \sum_{n=1}^{N}
\int_{\mathbb{T}^3} \lvert \bphi_{n} \rvert \lvert \nabla \bphi_{n} \rvert \lvert \bu \rvert \,dV \\
&\leq \int_{\mathbb{T}^3} \lvert \bu \rvert \left( \sum_{n=1}^N \lvert \bphi_n \rvert^2 \right)^{1/2} \left( \sum_{n=1}^N \lvert \nabla \bphi_n \rvert^2 \right)^{1/2}  \,dV.
\end{align*} 
What remains is
\begin{align*}
Tr \left[ \mathbf{P}_{N}\mathcal{M}\mathbf{P}_{N}\right] &\leq - \Gamtwo \sum_{n=1}^{N} \int_{\mathbb{T}^3} |\Delta\bphi_{n}|^{2}\!
dV + \Gamzero \sum_{n=1}^{N} \int_{\mathbb{T}^3} |\nabla\bphi_{n}|^{2}\,dV \\
& -\alpha \sum_{n=1}^{N} \int_{\mathbb{T}^3}|\bphi_{n}|^{2} \!dV + \int_{\mathbb{T}^3} \lvert \bu \rvert \left( \sum_{n=1}^N \lvert \bphi_n \rvert^2 \right)^{1/2} \left( \sum_{n=1}^N \lvert \nabla \bphi_n \rvert^2 \right)^{1/2}  \,dV.
\end{align*}
In the advection term we use the inequality $\|X\|_{L^{2}}^{2}\leq \|X\|_{L^{1}}^{\frac{q-2}{q-1}}\|X\|_{L^{q}}^{\frac{q}{q-1}}$ (for $q>2$), with $q=7/3$, and the Lieb-Thirring inequality \eqref{LT1B}, to find 
\begin{align*}
&\int_{\mathbb{T}^3} \lvert \bu \rvert \left( \sum_{n=1}^N \lvert \bphi_n \rvert^2 \right)^{1/2} \left( \sum_{n=1}^N \lvert \nabla \bphi_n \rvert^2 \right)^{1/2}  \,dV \\
&\leq \lVert \bu \rVert_{L^4} \left\lVert \left( \sum_{n=1}^N \lvert \nabla \bphi_n \rvert^2 \right)^{1/2} \right\rVert_{L^2} \left\lVert \left( \sum_{n=1}^N \lvert \bphi_n \rvert^2 \right)^{1/2} \right\rVert_{L^4} \\
&\leq \lVert \bu \rVert_{L^4} \left\lVert \sum_{n=1}^N \lvert \nabla \bphi_n \rvert^2 \right\rVert_{L^1}^{1/2} \left\lVert \sum_{n=1}^N \lvert \bphi_n \rvert^2 \right\rVert_{L^2}^{1/2} \\
&\leq \lVert \bu \rVert_{L^4}\left(\sum_{n=1}^{N}\int_{\mathbb{T}^3} |\bphi_{n}|^{2}\,dV\right)^{1/4} \left(\sum_{n=1}^{N} \int_{\mathbb{T}^3} |\Delta\bphi_{n}|^{2}\,dV \right)^{1/4}\\
&\times\left\lVert \sum_{n=1}^N \lvert \bphi_n \rvert^2 \right\rVert_{L^1}^{1/16} \left\lVert \sum_{n=1}^N \lvert \bphi_n \rvert^2 \right\rVert_{L^{7/3}}^{7/16} \\
&\leq \lVert \bu \rVert_{L^4}N^{5/16} \bigg[ \left(\sum_{n=1}^{N} \int_{\mathbb{T}^3} |\Delta\bphi_{n}|^{2}\,dV \right)^{7/16} + c \bigg] \\
&\leq \frac{1}{2}\Gamtwo \sum_{n=1}^{N} \int_{\mathbb{T}^3} |\Delta\bphi_{n}|^{2}\,dV + 4\Gamtwo^{-7/9} N^{5/9} \lVert \bu \rVert_{L^4}^{16/9} + c \lVert \bu \rVert_{L^4}N^{5/16}\,,
\end{align*}
where we have used the Lieb-Thirring inequality \eqref{LT1B} and estimate \eqref{LT2}. Now we obtain by using property \eqref{trrelns2}
\beq{tr4}
Tr \left[ \mathbf{P}_{N}\mathcal{M}\mathbf{P}_{N}\right] &\leq& - \frac{1}{2} \Gamtwo \sum_{n=1}^{N} \int_{\mathbb{T}^3} |\Delta\bphi_{n}|^{2}dV + \Gamzero \sum_{n=1}^{N} \int_{\mathbb{T}^3} |\nabla\bphi_{n}|^{2}\,dV\nonumber\\
&-&  \alpha \sum_{n=1}^{N} \int_{\mathbb{T}^3}|\bphi_{n}|^{2}dV + 4\Gamtwo^{-7/9} N^{5/9} \lVert \bu \rVert_{L^4}^{16/9} + c \lVert \bu \rVert_{L^4}N^{5/16} \nonumber\\
&\leq& - \frac{1}{2} \Gamtwo \sum_{n=1}^{N} \int_{\mathbb{T}^3} |\Delta\bphi_{n}|^{2}dV + \Gamzero \sum_{n=1}^{N} \int_{\mathbb{T}^3} |\nabla\bphi_{n}|^{2}dV \nonumber\\
&+& 4\Gamtwo^{-7/9} N^{5/9} \lVert \bu \rVert_{L^4}^{16/9} - \alpha N + c \lVert \bu \rVert_{L^4}N^{5/16}\,.\nonumber
\eeq
Applying Young's inequality yields
\begin{align*}
\Gamzero \sum_{n=1}^{N} \int_{\mathbb{T}^3} |\nabla\bphi_{n}|^{2}\,dV &\leq \frac{1}{4} \Gamtwo \sum_{n=1}^{N} \int_{\mathbb{T}^3} |\Delta\bphi_{n}|^{2}\,dV + \frac{\Gamzero^2}{\Gamtwo} \sum_{n=1}^{N} \int_{\mathbb{T}^3} |\bphi_{n}|^{2}\,dV \\
&= \frac{1}{4} \Gamtwo \sum_{n=1}^{N} \int_{\mathbb{T}^3} |\Delta\bphi_{n}|^{2}\,dV + \frac{\Gamzero^2}{\Gamtwo} N\,.
\end{align*}
Then we have
\begin{align*}
Tr \left[ \mathbf{P}_{N}\mathcal{M}\mathbf{P}_{N}\right] 
&\leq - \frac{1}{4} \Gamtwo \sum_{n=1}^{N} \int_{\mathbb{T}^3} |\Delta\bphi_{n}|^{2}dV\\
&+ 4 \Gamtwo^{-7/9} N^{5/9} \lVert \bu \rVert_{L^4}^{16/9} + N \left(-\alpha+ \frac{\Gamzero^2}{\Gamtwo} \right)+ c \lVert \bu \rVert_{L^4}N^{5/16} \\
&\leq - \frac{1}{4} \Gamtwo N^{7/3} + N \left(-\alpha+ \frac{\Gamzero^2}{\Gamtwo} \right) + 4\Gamtwo^{-7/9} N^{5/9} \lVert \bu \rVert_{L^4}^{16/9}\\ 
&+ c \lVert \bu \rVert_{L^4}N^{5/16}\,, 
\nonumber
\end{align*}
where we have used inequality \eqref{LT6}. From estimate \eqref{L2absorbingball1} we deduce that
\begin{align} 
&\Rb \left\langle \lVert \bu \rVert_{L^4}^4 \right\rangle_T \leq 8 \frac{|\alpha_{R}|^2}{\Rb}\,,
\end{align}
which then implies
\begin{align*}
\left\langle \lVert \bu \rVert_{L^4}^{16/9} \right\rangle_T &\leq \left\langle \lVert \bu \rVert_{L^4}^4 \right\rangle_{T}^{4/9} \leq \left(4 \frac{|\alpha_{R}|}{\Rb}\right)^{8/9}\,, \\
\left\langle \lVert \bu \rVert_{L^4} \right\rangle_{T} & \leq \left\langle \lVert \bu \rVert_{L^4}^4 \right\rangle_{T}^{1/4} \leq \left(4 \frac{|\alpha_{R}|}{\Rb}\right)^{1/2}\,.
\end{align*}
By taking time averages we find
\begin{align*}
&\left<Tr \left[ \mathbf{P}_{N}\mathcal{M}\mathbf{P}_{N}\right] \right>_{T} \leq - \frac{1}{4} \Gamtwo N^{7/3} + N \left(-\alpha+ \frac{\Gamzero^2}{\Gamtwo} \right) + 4N^{5/9} \Gamtwo^{-7/9} \left\langle \lVert \bu \rVert_{L^4}^{16/9} \right\rangle_T \\
&+ c \left\langle \lVert \bu \rVert_{L^4} \right\rangle_T N^{5/16} \\
&\leq - \frac{1}{4}\Gamtwo N^{7/3} + N \left(-\alpha+ \frac{\Gamzero^2}{\Gamtwo} \right) + 
4 N^{5/9} \Gamtwo^{-7/9}\left(4 \frac{|\alpha_{R}|}{\Rb}\right)^{8/9}\\ &+ c \left(4 \frac{|\alpha_{R}|}{\Rb}\right)^{1/2} N^{5/16}\,.
\end{align*}
We can estimate the value $N_{0}$ for which
$\left<Tr \left[ \mathbf{P}_{N}\mathcal{M}\mathbf{P}_{N}\right] \right>_{T} < 0$ when $N>N_{0}$.  This leads to the following inequalities:
\begin{align*}
N_0^{4/3} &\leq c \Gamtwo^{-1} \left(-\alpha + \frac{\Gamzero^2}{\Gamtwo} \right) =  \left(\frac{\Gamzero^2}{\Gamtwo^2}
-\alpha \Gamtwo^{-1}\right)\,; \\
N_0^{16/9} &\leq c 16\Gamtwo^{-16/9} \left(\frac{|\alpha_{R}|}{\Rb}\right)^{8/9}\,, \\
N_0^{97/48} & \leq c \Gamtwo^{-1} \left(4 \frac{|\alpha_{R}|}{\Rb}\right)^{1/2}\,. 
\end{align*}
We can observe that the third inequality is implied by the second one in the standard value range of the control parameters. Thus, we have 
\begin{equation}
N_{0} \sim \max \left\{\left\lvert- \alpha \Gamtwo^{-1} + \frac{\Gamzero^2}{\Gamtwo^2} \right\rvert^{3/4}\,,~~
2^{9/4}\Gamtwo^{-1}\left(\frac{|\alpha_{R}|}{\Rb}\right)^{1/2}\right\}\,.
\end{equation}
The result then follows from Proposition 2.1 and Theorem 3.3 in \cite[Chapter 5]{temam1997}.
\end{proof}

\section{\large Numerical evaluation of Lyapunov exponents and $d_L$}\label{Kiran}

We now present numerical estimates for the Lyapunov spectrum, $\mu_{n}$ and the Lyapunov dimension in Eq.~\eqref{dldef} for the $d=2$ case. In Theorem \ref{2dattthm}, based on a unit domain $[0,\,1]^2$, we estimated the number of exponents $N_{0}$ to be $N_{0} \sim \Gamzero/\Gamtwo$ (at leading order). As already noted in Remark \ref{2Dremark}, the following numerical calculation requires an adjustment of the domain length $L$. A re-working of this calculation on the domain $[0,\,L]^2$ turns this estimate into $N_{0} \sim L^{2}\Gamzero/\Gamtwo$.
\par\smallskip
The number of Lyapunov exponents is equal to the total number of degrees of freedom available numerically. Computing the Lyapunov spectrum for the full set of hydrodynamic equations, and consequently $d_{L}$, is computationally expensive~\cite{KMK1992,HR2019JCP}. However, as we describe below, it is not necessary to extract the full Lyapunov spectrum to compute $d_{L}$; we need only $N_{0}$ exponents, which is smaller than the total number of degrees of freedom. Theorem \ref{globalattractorthm} shows that $\Att_{L^2} = \Att_{H^1}$, so it makes no difference whether we work with velocity or vorticity fields. In fact, in \S\ref{2datt} our estimate for $d_{L}$ has been made using the velocity field; but to reduce numerical costs, we simulate the attractor dimension using the equations corresponding to the vorticity field $\bom= \mbox{curl}\,\bu$.
\begin{equation}{\label{s1a}}
\left(\partial_{t}+\bu\cdot\nabla\right)\bom = -\left(\alpha+\Gamzero\Delta + \Gamtwo\Delta^{2}\right)\bom - \Rb\mbox{curl}\,(\bu|\bu|^{2})\,. 
\end{equation}
\begin{figure}[!]
\includegraphics[width=\linewidth]{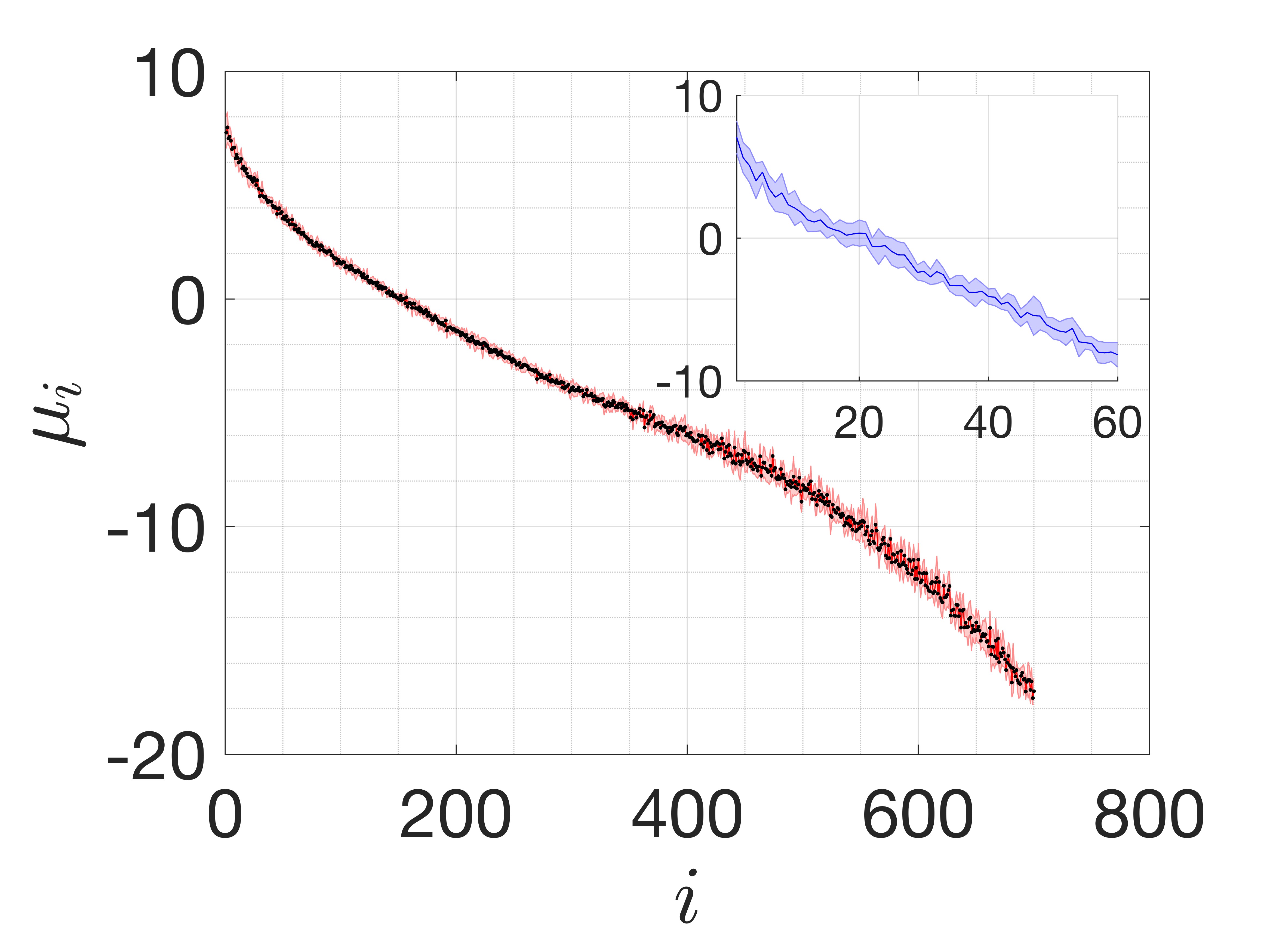}
\caption{The Lyapunov spectrum for the run 7, where the black dots are for the unordered spectrum and the red curve gives the ordered spectrum, with shaded region giving the error bars. The inset gives the ordered Lyapunov spectrum for the run 2.}\label{fig:lyp_conv}
\end{figure}
We begin by choosing a base vorticity field, $\bom_{0}$, to which we add a set of $i=1,..M$ initial perturbations $\{\delta\bom_{i}\}$ to get $M$ initial conditions (which we evolve indepedently)\,:
\bel{kvk_pert1}
\bom_{i}=\bom_{0}+\epsilon\delta\bom_{i}, 
\ee
where we use $\epsilon= 10^{-6}$. The initial perturbations are chosen to be orthonormal, i.e.,
\bel{kvk_on}
\left(\delta \bom_{i},\delta \bom_{j}\right)_{L^{2}}=\delta_{ij}\,,
\ee
where $\delta_{ij}$ is the Kronecker delta and $(.)_{L^{2}}$ denotes the standard $L^{2}$ inner product, e.g., for any real functions $f$ and $g$ defined on $V_{d}$:
\bel{l2norm}
(f,g)_{L^{2}}=\int_{V_{d}} fg~d^{d}\mathbf{x}.
\ee
\begin{figure}[!]
\includegraphics[width=\linewidth]{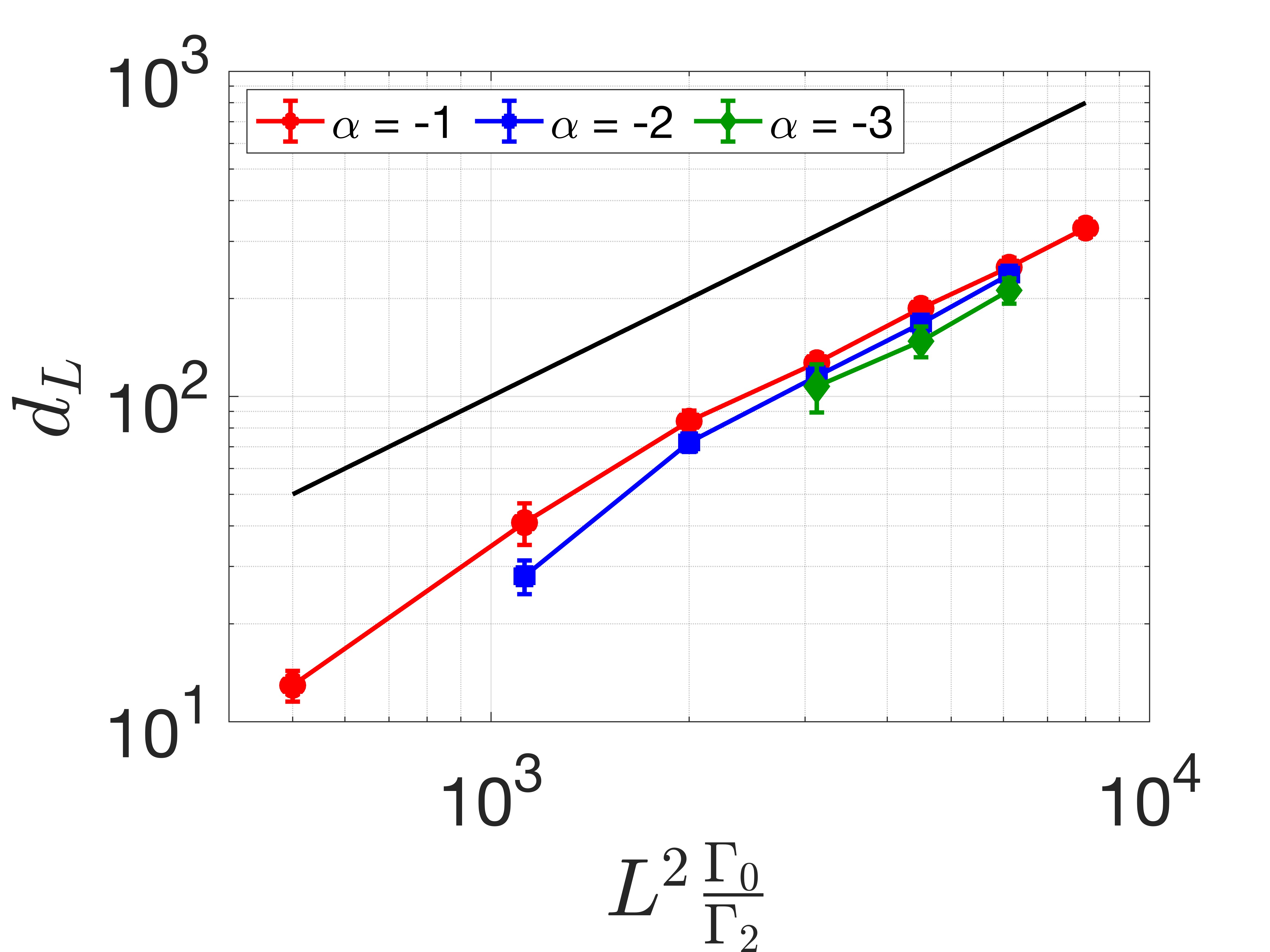}

    \caption{Log-log plot of $d_{L}$ versus $L^2(\Gamzero/\Gamtwo)$ for different values of $\alpha$. The solid black line corresponds to  $10^{-1}\times L^2(\Gamzero/\Gamtwo)$.}
    \label{fig:attdim}
\end{figure}
We then define a norm using the above inner-product, namely, $\vert\vert f\vert\vert$ $\coloneqq$ $\sqrt{(f,f)_{L^{2}}}$. Choosing the base vorticity field as a steady-state solution of Eq.~\eqref{s1a}, the set of $M$ initial conditions $\{\bom_{i}\}$ and $\bom_{0}$ are then evolved simultaneously using the full non-linear equations Eq.~\eqref{s1a}. We estimate the Lyapunov exponents by calculating the finite-time Lyapunov exponents (FTLE) $\{\gamma_{i}\}$:
\bel{kvk_FTLE}
\gamma_{i}(t,t_{0})=\frac{1}{t-t_{0}}\log\frac{\vert\vert\delta\bom_{i}(t)\vert\vert}{\vert\vert\delta\bom_{i}(t_{0})\vert\vert},
\ee
The Lyapunov exponents are defined as the asymptotic limits to FTLEs~\cite{JM2015}:
 \bel{kvk_LE1}
\mu_{i}=\lim_{t\to\infty}\gamma_{i}(t,t_{0}).
 \ee
 With the assumptions of ergodicity~\cite{Ose1968,JM2015,HR2019JCP,HR2019PRF}, the Lyapunov exponents are simply the ensemble averages of the FTLEs:
 \bel{kvk_LE2}
 \mu_{i}=\langle\gamma_{i}\rangle =\frac{1}{t-t_{0}}\bigg\langle\log\frac{\vert\vert\delta\bom_{i}(t)\vert\vert}{\vert\vert\delta\bom_{i}(t_{0})\vert\vert}\bigg\rangle,
 \ee
where $\langle \cdot \rangle$ denotes the ensemble average (see below). 
\par\smallskip
When estimating $\{\gamma_{i}\}$  and the corresponding Lyapunov exponents numerically, there are two important challenges. Firstly, the equations do not preserve orthonormality of the perturbations and all the perturbations collapse onto the fastest-growing direction giving just the largest Lyapunov exponent. Secondly, since we are using the full nonlinear equations, the growth $\delta\bom_{i}$ is exponential only for small times, which corresponds to small $\vert\vert\delta\bom_{i}\vert\vert$. We resolve both these issues by periodically re-orthogonalizing the perturbations and rescaling their norms by $\epsilon$. At the time of re-orthogonalization we measure $\{\gamma_{i}\}$ using Eq.~\eqref{kvk_FTLE}. We repeat this process multiple times resulting in statistically independent sets for $\{\gamma_{i}\}$ that constitute the ensemble \cite{GL1987PRL,GL1991,KMK1992,HR2019JCP,HR2019PRF,BC2019PRE,CTB2020PRF}. We use a pseudo spectral method to solve Eq.~\eqref{s1a}. For the time integration in Fourier-space we use a second-order integrating factor Runge-Kutta method~\cite{KGVP2023}. Finally, we use a modified Gram-Schmidt algorithm to orthogonalize the perturbations in the real space and find that it is sufficient to orthogonalize every 50 iterations.
\par\vspace{3mm}\noindent
To estimate $d_{L}$, we need enough positive and negative Lyapunov exponents such that Eq.~\eqref{KY2} is satisfied. In our simulations, we have $M$ in the range between $50$ to $1000$. After large ensemble averages, the Lyapunov exponents are ordered as follows: $\mu_{1}>\mu_{2}>\ldots>\mu_{M}$\,; this allows us to check if the resulting spectrum has converged and to obtain enough Lyapunov exponents to compute $d_{L}$~\cite{BC2019PRE}. In Fig.~\ref{fig:lyp_conv} we plot the Lyapunov spectra for the two representative runs 7 and 2 (the latter in the inset). The black dots indicate the unordered Lyapunov exponents and the red curve corresponds to the same exponents, but ordered. We find reasonable agreement between the spectra for ordered and unordered exponents; this indicates good convergence~\cite{BC2019PRE}. 

A few comments are in order: we denote by $\sigma_{i}$ the standard deviation of the $i^{th}$ exponent $\mu_{i}$; for small $i$, $\mu_{i}/\sigma_{i}\sim \mathcal{O}(1)$, i.e,, the largest Lyapunov exponents have large errors. In contrast, for very large $i$, i.e., very negative Lyapunov exponents, $\mu_{i}/\sigma_{i} \sim \mathcal{O}(10^{-2})$, so the errors are small. These observations are consistent with Lyapunov-spectrum measurements in two and three dimensions for homogeneous isotropic and incompressible flows ~\cite{HR2019PRF,BC2019PRE,CTB2020PRF}.  
\begin{table}[!h]
		\centering
		\begin{tabular}{|c|c| c|}
			\hline
			Run & $L$ &  $\alpha$ 
			\\
			\hline
			1& 1& -1\\
			\hline
			2& 1.5& -1\\
			\hline
			3& 2& -1\\
            \hline
			4& 2.5& -1\\
			\hline
			5& 3& -1\\
            \hline
			6& 3.5& -1\\
            \hline
			7& 4.0& -1\\
			\hline
			8& 3.5& -2\\
			\hline
			9& 3.5& -3\\
			\hline
			10& 3& -3\\
			\hline
			11& 3& -2\\
			\hline
			12& 2.5& -3\\
			\hline
			13& 2.5& -2\\
			\hline
			14& 1.5& -2\\
			\hline
			15& 2.0& -2\\
			\hline
		\end{tabular}
        \medskip
        \caption{\footnotesize List of the parameters for our DNSs\,: $L$ is the length of the periodic square. We fixed the parameters $\beta=0.5$, $\Gamma_0=-0.045$, $\Gamma_2=9\times10^{-5}$, $\lambda=3.5$, and $N=128$ in all our simulations.
        }\label{tab:table_DNS}
\end{table}
\par\vspace{0mm}\noindent
From Theorem 5, the leading-order term for the upper-bound of the attractor-dimension scales as $d_L\sim L^2(\Gamzero/\Gamtwo)$. In Fig.~\ref{fig:attdim} we plot $d_{L}$ versus $L^2(\Gamzero/\Gamtwo)$ for different values of $\alpha$. We note that the numerically estimated values scale as $L^2(\Gamzero/\Gamtwo)$ (solid black line) for $L^2(\Gamzero/\Gamtwo)>10^3$; the dependence of these plots on $\alpha$ is marginal. 

\section{\large Conclusions}\label{conclusions}

The TTSH equations in two and three space dimensions are used to model dense bacterial suspensions~\cite{AJC2020,ACJ2022,Pandit_2025}, and other active turbulent systems. We have studied these equations from the point of view of rigorous analysis. In particular, in the case of periodic boundary conditions, we have proved that they possess a global attractor whose Lyapunov dimension is bounded. By interpreting this as the number of degrees of freedom in the system, we have shown that the estimates are dominated by the linear terms, which means that we have established the predominance of the Swift-Hohenberg scale as the natural resolution scale.
\par\smallskip
We have also carried out pseudospectral direct numerical simulations of these PDEs in two dimensions, obtained Lyapunov spectra and the Kaplan-Yorke dimensions for representative parameter values, and we have shown that our numerical results are consistent with the rigorous bounds that have been derived analytically.

\section{\large Acknowledgements}\label{acknowl}

K.V. Kiran and R. Pandit thank the Anusandhan National Research Foundation (ANRF), the Science and Engineering Research Board (SERB), and the National Supercomputing Mission (NSM), India, for support, and the Supercomputer Education and Research Centre (IISc), for computational resources. D.W. Boutros would like to acknowledge support from the Cambridge Trust and the Cantab Capital Institute for Mathematics of Information. K.V. Kiran acknowledges support from the Simons' Collaboration on Wave Turbulence (award No. 651459).
The authors would like to thank the Isaac Newton Institute for Mathematical Sciences, Cambridge, for support and hospitality during the programme \textit{Anti-diffusive dynamics\,: from sub-cellular to astrophysical scales}, where some of the work on this paper was undertaken. This work was supported by EPSRC grant no EP/R014604/1. 

\bibliographystyle{abbrv}
\bibliography{ttsh}

\end{document}